\documentclass[11pt]{amsart}
%\date{May 30, 2024}
\usepackage{amsthm,amssymb,bbm}
\usepackage{mathtools}

\newtheorem{theorem}{Theorem}
\newtheorem{lemma}[theorem]{Lemma}
\newtheorem{corollary}[theorem]{Corollary}

\numberwithin{equation}{section}

\def\const{C}
\def\tr{{\rm tr \,}}

\def\cz{\mathbb{C}}
\def\rz{\mathbb{R}}

\def\bp{{\bold p}}

\def\bx{{\bold x}}
\def\bR{{\bold R}}
\def\by{{\bold y}}

\def\cC{\mathcal{C}}
\def\cD{{\mathcal D}}
\def\cE{{\mathcal E}}
\def\cO{{\mathcal O}}

\def\cX{\mathcal X}
\def\cW{{\mathcal W}}

\def\gH{\mathfrak{H}}
\def\gQ{\mathfrak{Q}}
\def\gS{\mathfrak{S}}

\def\nz{\mathbb{N}}
\def\rz{\mathbb{R}}

\def\rd{\mathrm{d}}
\def\ri{\mathrm{i}}

 % antisymmetrisches Tensorprodukt

\def\dhf{{D^\mathrm{HF}_\gamma}}

\newenvironment{dedication}{\itshape\center}{\par\medskip}
\newenvironment{acknowledgments}{\bigskip\small\noindent\textit{Acknowledgments.}}{\par}

\title{Relativistic Exchange Bounds} \author{Long Meng}
\address{Mathematisches Institut\\ Ludwig-Maximilians-Universit\"at
  M\"unchen\\ Theresienstr. 39\\ 80333 M\"unchen\\Germany}
\email{mltongji@gmail.com} \author{Heinz Siedentop}
\address{Mathematisches Institut\\ Ludwig-Maximilians-Universit\"at
  M\"unchen\\ Theresienstr. 39\\ 80333 M\"unchen\\Germany}
\email{h.s@lmu.de} \author{Matthias Tiefenbeck}
\address{Mathematisches Institut\\ Ludwig-Maximilians-Universit\"at
  M\"unchen\\ Theresienstr. 39\\ 80333 M\"unchen\\Germany}
\email{tiefenbeck@gmail.com}

\keywords{Exchange energies, relativistic Hartree-Fock theory,
  relativistic M\"uller theory}

\subjclass[2020]{81-V45,35Q40}

\begin{document}
\maketitle
\begin{dedication}
  In admiration on the occasion of Bernard Helffer's 75th birthday
\end{dedication}
\begin{abstract}
  We collect estimates of the exchange energy of the relativistic
  no-pair Hartree-Fock and M\"uller functional and use them to show the
  existence of a minimizer and stability of matter of the relativistic
  M\"uller functional in the free picture.
\end{abstract}

\section{Introduction}
Hartree-Fock theory is well established in non-relativistic quantum
mechanics which agrees energetically for heavy atoms with the
underlying non-relativistic quantum mechanics up to the subleading
order in the atomic number (see Lieb and Simon \cite{LiebSimon1977}
for the leading order, Siedentop and Weikard
\cite{SiedentopWeikard1986,SiedentopWeikard1987O,SiedentopWeikard1988,SiedentopWeikard1989}
(upper and lower bound) and Hughes \cite{Hughes1986,Hughes1990} (lower
bound) for the leading correction, and Bach \cite{Bach1993} and
Fefferman and Seco
\cite{FeffermanSeco1989,FeffermanSeco1990O,FeffermanSeco1992,FeffermanSeco1993,FeffermanSeco1994}
for the subleading order). Moreover agreements of the densities are
shown on the scale $Z^{-\frac13}$ in \cite{LiebSimon1977} and on
smaller scales by Iantchenko \textit{et al.}
\cite{Iantchenkoetal1996}, Iantchenko \cite{Iantchenko1997} and by
Iantchenko et al. \cite{IantchenkoSiedentop2001} (density matrix). The
same accuracy of the energy has been shown for the M\"uller functional
\cite{Siedentop2009,Siedentop2014}.  In the course of proving those
results, estimates on the Hartree-Fock and M\"uller mean fields, the
direct parts, and the exchange energies, were established.

In relativistic quantum mechanics -- which is necessary for an
accurate quantitative description of large $Z$-atoms -- the situation
is well established for the one-particle case: the multi-center Dirac
operator with several atomic Coulomb charges is
\begin{equation}
  \label{eq:D}
  D_{c,\varphi} := c\boldsymbol\alpha\cdot \bp + \beta c^2 - c^2- \underbrace{\sum_{k=1}^K{Z_k\over|\bx-\bR_k|}}_{=:\varphi(\bx)}
\end{equation}
self-adjointly realized in $\gH:=L^2(\rz^3:\cz^4)$ in a canonical way
(see Schminke \cite{Schmincke1972}, Nenciu \cite{Nenciu1976} for
$K=1$, Klaus \cite{Klaus1980} for $K=2$, and Esteban \textit{et al.}
\cite{Estebanetal2021} for a general case that includes arbitrary
$K\in\nz$). Here we use $\boldsymbol\alpha, \beta$ for the four Dirac
matrices in canonical representation, $\bp:= -\ri\nabla$, and
$c\in\rz_+$ and $Z_1,..., Z_K\in(0,c)$ are known as the velocity of
light and the atomic numbers of the nuclei generating the Coulomb
potentials. For the free Dirac operator we also write $D_c$ instead of
$D_{c,0}$

For $K=1$ we may assume $\bR_1=0$. In addition
set $Z:=Z_1$ and write -- in abuse of notation -- $D_{c,Z}$ instead of
$D_{c,\varphi}$. In this case point spectrum was found by Darwin
\cite{Darwin1928}, Gordon \cite{Gordon1928}.  Note for
curiosity that the actual spectrum was known already to Sommerfeld
\cite{Sommerfeld1915} before the Dirac operator was written down
(Dirac \cite{Dirac1928}). The form domain of $D_{c,Z}$ is by
construction $\gQ:=H^\frac12(\rz^3:\cz^4)$. Its domain, easily
characterized as $H^1(\rz^3:\cz^4)$ for $\kappa:=Z/c\in[0,1/2]$ by
Kato-Rellich perturbation theory, was found by Morozov and M\"uller
\cite{MorozovMuller2017} for all $Z\in[0,1)$. The elements in $\gH$
are addressed as spinors which we write as complex valued functions on
$\Gamma:=\rz^3\times\{1,2,3,4\}$; the elements in $\Gamma$ are written
as $x:=(\bx,\sigma)$ or as $4$-vector valued functions on $\rz^3$ as
deemed practical.

However, the situation is less canonical for more electrons: The
non-uniqueness of the splitting of the one-particle space $\gH$ into
electron and positron space already makes the Fock space of the
electron-positron field non-unique even in the external field case
without photons (see Thaller \cite[Section 10]{Thaller1992}). This
non-uniqueness drips down to the no-pair Hamiltonians (Sucher
\cite{Sucher1980}, Pilkuhn \cite[Section 3.7]{Pilkuhn2005}), and,
eventually, is also inherited by relativistic Hartree-Fock and related
theories.

\medskip

To the relativistic no-pair Hartree-Fock functional we start with
identifying the one-electron space by fixing a potential $V$ which in
turn gives rise to the projection
$P_V:=\chi_{(-c^2,\infty)}(D_{c,V})$. The one-particle electron space
defined by $V$ is $\gH_V:=P_V\gH$.  We set
\begin{equation}
  \label{eq:G}
  G:=\{\gamma|T_c^\frac12\gamma T_c^\frac12\in\gS^1(\gH), 0\leq \gamma\}\
  \mathrm{and}\ G^N:=\{\gamma\in G| \tr\gamma\leq N\}.
\end{equation}
(We write $\gS^p(\gH)$ and $\|\cdot\|_p$ for the Schatten ideal of
order $p$ over the Hilbert space $\gH$ and its norm. Since
$\|\cdot\|_\infty$ is equal to the operator norm $\|\cdot\|$, we drop
the index ad libitum.) Note that $|\bp|\gamma$ might not be a trace
class operator for $\gamma\in G$. Nevertheless, we write as a
shorthand in abuse of notation $\tr(|\bp|\gamma)$ instead of
$\tr(|\bp|^\frac12\gamma|\bp|^\frac12)$ and also do in similar cases
for other operators.

The no-pair relativistic molecular Hartree-Fock functional is
\begin{equation}
  \label{nopairhf}
  \begin{split}
    \cE^\mathrm{HF}_{c,Z}:& G \to \rz\\
    \gamma \mapsto& \tr[(D_{c,\varphi}\gamma) +\cD[\rho_\gamma] - \cX[\gamma]
    + \underbrace{\sum_{1\leq k<l\leq K}{Z_kZ_l\over|\bR_k-\bR_l|}}_{=:R}
  \end{split}
\end{equation}
where
\begin{equation}
  \label{eq:DX}
  \cD(\rho,\sigma) :=\frac12\int_{\rz^3}\rd \bx \int_{\rz^3}\rd \by
  {\overline{\rho(\bx)}\sigma(\by)\over|\bx-\by|}, \
  \cX(\gamma,\delta):= \frac12\int_\Gamma\rd \bx \int_\Gamma\rd \by
  {\overline{\gamma(x,y)}\delta(x,y)\over|\bx-\by|}.
\end{equation}
The associated quadratic forms $\cD[\rho]$ and $\cX[\gamma]$ are
known as classical interaction energy of the charge distribution
$\rho$ and the exchange energy of the reduced one-particle density
matrix $\gamma$. Note that we write $\gamma(x,y)$ for the integral
kernel of $\gamma$ and
$\rho_\gamma(\bx) :=\sum_{\sigma=1}^4\lambda_n |\xi_n(\bx,\sigma)|^2$
where the $\xi_n$ is an orthonormal system of eigenspinors of $\gamma$
with associated eigenvalues $\lambda_n$, i.e.,
\begin{equation}
  \label{eigenfunktionen}
  \gamma=\sum_n\lambda_n|\xi_n\rangle\langle\xi_n|.
\end{equation}
In the language of quantum chemistry the $\xi_n$ are called natural
orbitals and the $\lambda_n$ occupation numbers of the $\xi_n$.

We note that in the atomic case, i.e., $K=1$, the last term in
\eqref{nopairhf} is not present.

At first sight $\cE^\mathrm{HF}_{c,Z}$ looks like the nonrelativistic
Hartree-Fock functional with the Laplace operator replaced by the free
Dirac operator on functions with $4$ spin components. However, because
of mathematical and physical reasons it needs to be restricted, namely
to
\begin{equation}
  G_V:=\{\gamma\in G|\gamma\leq P_V\}
\end{equation}
or
\begin{equation}
  \ G_{V,N}:=\{\gamma\in G_V|\tr\gamma\leq N\}
\end{equation}
with $T_c:=\sqrt{c^2\bp^2+c^4}-c^2$. The one-electron space $\gH_V$
fixed by $V$ defines what Sucher \cite{Sucher1980} calls
``picture''. The operator $V$ reflects physically the exterior
potential and the mean field of the electrons. Therefore reasonable
assumptions on $V$ are:
\begin{enumerate}
\item   $(D_{c,V}+\ri)^{-1}-(D_{c,0}+\ri)^{-1}\in\gS^\infty(\gH)$
\item The interval $(-2c^2,-c^2)$ is in the resolvent set of
$D_{c,\epsilon V}$ for all $\epsilon\in[0,1]$
\item $\gQ_V:=P_V\gQ\subset\gQ$. 
\end{enumerate}

The Euler equations of the functional
$\cE^\mathrm{HF}_{c,Z}\Big|_{\gQ_V}$ are the no-pair Hartree-Fock
equations, namely
\begin{equation}
  [P_V\dhf P_V,\gamma]=0,\ 0\leq \gamma\leq P_V,\ \tr\gamma= Z,\ T_c^{1/2}\gamma T_c^{1/2}\in \gS^1(\gH)
\end{equation}
where
\begin{equation}
  \label{eq:DHF}
  \dhf:= D_{c,\varphi}+\phi_\gamma -X_\gamma
\end{equation}
where for any given $\gamma\in G$ the screening potential
$\phi_\gamma$ and the exchange operator $X_\gamma$ are defined by
\begin{equation}
  \label{eq:austausch}
\phi_\gamma:=\rho_\gamma*|\cdot|^{-1}\ \mathrm{and}\ X_\gamma(x,y):= {\gamma(x,y)\over|\bx-\by|}.
\end{equation}

We note in passing that for $N\in\nz$, the minimizer can be picked as
a projection, i.e., $\gamma^2=\gamma$, by a minor modification of
Bach's argument \cite{Bach1992}.

A solution of the non-uniqueness problem induced by $V$ was offered
by Mittleman \cite{Mittleman1981} by choosing the most stable electron
space, i.e., by maximizing over $V$. The resulting stationary
condition becomes
\begin{equation}
  \label{eq:df}
  [\dhf,\gamma]=0
\end{equation}
with $\gamma^2=\gamma\leq \chi_{(-c^2,\infty)}(\dhf)$,
$V=-\varphi+\phi_\gamma-X_\gamma$,
$\gamma,|\bp|^{1/2}\gamma |\bp|^{1/2} \in \gS^1(\gH)$, and
$ \tr\gamma=Z$. This equation is known as Dirac--Fock equation (see,
e.g., \cite{EstebanSere1999,Paturel2000,Sere2023}) or relativistic
Hartree-Fock equation. The relationship between Dirac--Fock model and
Mittleman's definition is studied in
\cite{Barbarouxetal2005,Meng2023}.

\section{Estimates on  exchange terms in relativistic Hartree-Fock theory}

We start with a well known observation (see, e.g., Lieb and Yau
\cite[Eq. (7.4)]{LiebYau1988}):
\begin{lemma}
  \label{lemma1}
  We have the pointwise bound
\begin{equation}
  \label{gr}
  \sum_{\sigma,\tau=1}^4|\gamma(x,y)| \leq \sqrt{\rho_\gamma(\bx)}\sqrt{\rho_\gamma(\by)}.
\end{equation}
\end{lemma}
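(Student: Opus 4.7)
The plan is to exploit the spectral decomposition of $\gamma$ recalled in~\eqref{eigenfunktionen} and apply a pointwise Cauchy--Schwarz inequality in the eigenvalue index, mirroring the standard argument for the spinless case in \cite[Eq.~(7.4)]{LiebYau1988}.

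Since $\gamma\geq 0$, the eigenvalues $\lambda_n$ are non-negative and the $\xi_n$ form an orthonormal family in $\gH$. Consequently, the integral kernel is
\[
\gamma(x,y)=\sum_n \lambda_n\,\xi_n(x)\,\overline{\xi_n(y)},\qquad x=(\bx,\sigma),\ y=(\by,\tau).
\]
The key step is to split $\lambda_n=\sqrt{\lambda_n}\cdot\sqrt{\lambda_n}$ and apply Cauchy--Schwarz in $n$ to produce the pointwise estimate
\[
|\gamma(x,y)|^2\leq \Bigl(\sum_n \lambda_n\,|\xi_n(\bx,\sigma)|^2\Bigr)\Bigl(\sum_n \lambda_n\,|\xi_n(\by,\tau)|^2\Bigr).
\]

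From there, summation over the spin indices $\sigma$ and $\tau$ on the right-hand side factorises into a product of two independent spin sums, each of which is precisely $\rho_\gamma(\bx)$ resp.\ $\rho_\gamma(\by)$ by the very definition of the density in~\eqref{eigenfunktionen}. This yields
\[
\sum_{\sigma,\tau=1}^4|\gamma(x,y)|^2 \leq \rho_\gamma(\bx)\,\rho_\gamma(\by),
\]
and the claimed bound \eqref{gr} follows upon extracting the square root (reading the square root through the spin sums on the left).

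I do not anticipate a substantive obstacle: the proof is a one-line Cauchy--Schwarz together with the definition of $\rho_\gamma$. The only points requiring minimal care are that $\lambda_n\geq 0$, so that the weighting $\sqrt{\lambda_n}$ is well-defined (this is where the positivity of $\gamma$ enters), and the consistent bookkeeping of the four spin components when passing from the pointwise estimate to the stated inequality.
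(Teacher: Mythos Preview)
Your approach---eigenfunction expansion and Cauchy--Schwarz in the summation index $n$---is exactly what the paper does. The argument up to and including
\[
\sum_{\sigma,\tau=1}^4 |\gamma(x,y)|^2 \leq \rho_\gamma(\bx)\,\rho_\gamma(\by)
\]
is correct.

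However, your final step (``extracting the square root, reading the square root through the spin sums on the left'') is not valid: for non-negative numbers $a_{\sigma\tau}$ one has $\bigl(\sum_{\sigma,\tau} a_{\sigma\tau}\bigr)^2 \geq \sum_{\sigma,\tau} a_{\sigma\tau}^2$, so a bound on $\sum a_{\sigma\tau}^2$ does \emph{not} imply the same bound on $\bigl(\sum a_{\sigma\tau}\bigr)^2$. In fact the inequality \eqref{gr} as literally written cannot hold: take $\gamma=|\xi\rangle\langle\xi|$ with $\xi(\bx,\sigma)=\tfrac12 f(\bx)$ independent of $\sigma$; then the left side equals $4|f(\bx)||f(\by)|$ while the right side equals $|f(\bx)||f(\by)|$.

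The resolution is that \eqref{gr} is a typo in the paper: the intended (and true) statement is the one you actually proved, namely $\sum_{\sigma,\tau}|\gamma(x,y)|^2\leq\rho_\gamma(\bx)\rho_\gamma(\by)$, which is precisely the relativistic analogue of \cite[Eq.~(7.4)]{LiebYau1988}. Your proof of that version is complete and matches the paper's; just drop the last sentence about extracting square roots.
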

\begin{proof}
  The claim follows immediately from the eigenfunction expansion
  \eqref{eigenfunktionen} of $\gamma$ using the Schwarz inequality in
  the summation index $n$.
\end{proof}

Next, we address the relative form compactness of
$W_\gamma$.
\begin{lemma}
  \label{lemma2}
  Suppose $\gamma\in G$. Then
  \begin{equation}
    \label{eq:rfc}
    \sqrt{\phi_\gamma}(\bp^2+1)^{-\frac14}\in \gS^8(\gH).
  \end{equation}
  In particular, $\phi_\gamma$ is relatively form compact with respect to $(\bp^2+1)^{1/2}$.
\end{lemma}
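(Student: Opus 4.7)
The approach is to recognise $\sqrt{\phi_\gamma}(\bp^2+1)^{-1/4}$ as an operator of the form $f(\bx)g(\bp)$ and invoke the Kato--Seiler--Simon inequality
\[
\|f(\bx)g(\bp)\|_{\gS^p}\leq (2\pi)^{-3/p}\|f\|_{L^p(\rz^3)}\|g\|_{L^p(\rz^3)},\qquad 2\leq p<\infty,
\]
with the choice $p=8$, $f=\sqrt{\phi_\gamma}$, $g(\xi)=(\xi^2+1)^{-1/4}$. The momentum symbol is trivially in $L^8(\rz^3)$ since $\|g\|_{L^8}^8=\int_{\rz^3}(\xi^2+1)^{-2}\,d\xi<\infty$ in three dimensions, so the whole problem reduces to establishing $\phi_\gamma\in L^4(\rz^3)$.

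The density-to-potential step is handled by the Hardy--Littlewood--Sobolev inequality: since $\phi_\gamma=\rho_\gamma*|\cdot|^{-1}$, one has $\phi_\gamma\in L^4(\rz^3)$ as soon as $\rho_\gamma\in L^{12/11}(\rz^3)$ (the exponent relation $1/4=11/12-2/3$). The remaining $L^{12/11}$-bound on the density is obtained from the kinetic hypothesis $\tr(T_c\gamma)<\infty$ through a relativistic Lieb--Thirring/Daubechies-type estimate, which provides $\rho_\gamma\in L^{4/3}(\rz^3)$; together with $\rho_\gamma\in L^1(\rz^3)$ (once the total charge $\tr\gamma$ is under control) interpolation yields $\rho_\gamma\in L^{12/11}$ since this exponent lies inside $[1,4/3]$.

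The main obstacle, in my estimation, lies in the density step rather than in the Kato--Seiler--Simon application. The subtlety is that $T_c(\xi)\sim\xi^2/2$ near $\xi=0$, so the raw condition $T_c^{1/2}\gamma T_c^{1/2}\in\gS^1$ does not by itself control $|\bp|^{1/2}\gamma|\bp|^{1/2}$ in trace class; one has to treat the low- and high-momentum contributions to $\rho_\gamma$ separately, exploiting $T_c\sim\bp^2/2$ near zero and $T_c\sim c|\bp|$ at infinity. Once \eqref{eq:rfc} has been secured, the ``in particular'' assertion is immediate from $\gS^8\subset\gS^\infty$: the operator $\sqrt{\phi_\gamma}(\bp^2+1)^{-1/4}$ is compact, hence so is the composition $(\bp^2+1)^{-1/4}\phi_\gamma(\bp^2+1)^{-1/4}$ with its adjoint, which is precisely the relative form compactness of $\phi_\gamma$ with respect to $(\bp^2+1)^{1/2}$.
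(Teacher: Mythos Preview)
Your overall architecture---Kato--Seiler--Simon with $p=8$, Hardy--Littlewood--Sobolev to pass from $\rho_\gamma$ to $\phi_\gamma$, and a kinetic-energy bound on the density---is exactly the paper's. The difference lies in the density step. The paper does not use a Daubechies/Lieb--Thirring inequality; instead it applies the Hoffmann--Ostenhof inequality $(\sqrt{\rho_\gamma},|\bp|\sqrt{\rho_\gamma})\le\tr(|\bp|\gamma)$ and then the Sobolev embedding $\dot H^{1/2}(\rz^3)\hookrightarrow L^3(\rz^3)$ on $\sqrt{\rho_\gamma}$, obtaining $\rho_\gamma\in L^{3/2}$ directly. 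This buys something your route does not: Hoffmann--Ostenhof needs only $\gamma\ge0$, whereas Daubechies requires the fermionic constraint $0\le\gamma\le1$, which is \emph{not} part of the definition of $G$. Your interpolation with $L^1$ likewise presupposes $\tr\gamma<\infty$, again not contained in the bare hypothesis $\gamma\in G$.

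Your identification of the low-momentum obstacle is on target: since $T_c\sim\bp^2/2$ near zero, the condition $T_c^{1/2}\gamma T_c^{1/2}\in\gS^1$ does not by itself yield $\tr(|\bp|\gamma)<\infty$, and the paper simply writes the latter without comment. In all later applications $\gamma$ lies in $G_{V,N}$ or $G^N$, where $\tr\gamma<\infty$ is available and the paper's bound $|\bp|\le(c\lambda)^{-1}(T_c+c^2\lambda^2)$ then closes this gap; that same extra hypothesis would also legitimise your $L^1$ interpolation, though you would still need $\gamma\le1$ for the Daubechies step. If you want to work at the stated generality, replace Daubechies by the Hoffmann--Ostenhof/Sobolev argument.
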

\begin{proof}
We recall Hoffmann-Ostenhofs' inequality for $|\bp|$
\cite[Theorem 7.13]{LiebLoss2001}
\begin{equation}
  \label{eq:ho2}
 \int_{\rz^3} ||\bp|^{\frac12}\sqrt{\rho_\gamma} (x)|^2\leq  \tr(|\bp|\gamma)
\end{equation}
Thus, using first the Sobolev inequality and then  \eqref{eq:ho2} we get
\begin{equation}
  \label{eq:32}
  \left(\int \rho_\gamma^\frac32\right)^\frac23
  \leq C (\sqrt{\rho_\gamma},|\bp|\sqrt{\rho_\gamma})
  \leq \tr(|\bp|\gamma)<\infty.
\end{equation}
Therefore $\rho_\gamma\in L^\frac32(\rz^3)$ if $\gamma\in G$.

From the Hardy-Littlewood-Sobolev inequality, it follows (see Lieb and
Loss \cite[Section 4.3, Eq. (9)]{LiebLoss2001})
\begin{equation}
  \label{eq:hls}
  \|\phi_\gamma\|_{L^r(\mathbb{R}^3)}\leq C\|\rho_\gamma\|_p
\end{equation}
for
$$\frac1p=\frac23+\frac1r,\ p\in(1,\tfrac32).$$
Thus $\phi_\gamma\in L^r(\rz^3)$ for all $r\in(3,\infty)$. Using
Seiler's and Simon's inequality \cite{SeilerSimon1975} we get
$\sqrt{\phi_\gamma(x)}(\bp^2+1)^{-\frac14}\in\gS^8(\gH)$ and thus
relative compactness.
\end{proof}

Next we show that the exchange operator is a Hilbert-Schmidt operator
and estimate its norm:
\begin{lemma}
  \label{XHS}
  Assume $\gamma\in G$. Then $X_\gamma\in \gS^2(\gH)$ and
  \begin{equation}
    \label{eq:hsn}
    \|X_\gamma\|_2\lesssim \tr(|\bp|\gamma).
  \end{equation}
\end{lemma}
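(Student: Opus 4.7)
The plan is to compute the Hilbert-Schmidt norm of $X_\gamma$ directly from its kernel, bound the integrand pointwise using Lemma \ref{lemma1}, and then reduce matters to estimates on $\rho_\gamma$ that have already been produced in the proof of Lemma \ref{lemma2}.

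Concretely, I would first write
\begin{equation*}
  \|X_\gamma\|_2^2 = \int_\Gamma\!\!\int_\Gamma \frac{|\gamma(x,y)|^2}{|\bx-\by|^2}\,\rd x\,\rd y
  = \sum_{\sigma,\tau=1}^4 \int_{\rz^3}\!\!\int_{\rz^3}\frac{|\gamma((\bx,\sigma),(\by,\tau))|^2}{|\bx-\by|^2}\,\rd\bx\,\rd\by.
\end{equation*}
Using the elementary bound $\sum_{\sigma,\tau}|\gamma(x,y)|^2 \leq \bigl(\sum_{\sigma,\tau}|\gamma(x,y)|\bigr)^2$ together with the pointwise estimate \eqref{gr} of Lemma \ref{lemma1}, I get
\begin{equation*}
  \|X_\gamma\|_2^2 \leq \int_{\rz^3}\!\!\int_{\rz^3} \frac{\rho_\gamma(\bx)\rho_\gamma(\by)}{|\bx-\by|^2}\,\rd\bx\,\rd\by.
\end{equation*}

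Next I would invoke the Hardy-Littlewood-Sobolev inequality in $\rz^3$ with exponent $\lambda=2$ (so the balance condition $\tfrac1p+\tfrac1q+\tfrac{\lambda}{n}=2$ forces $p=q=\tfrac32$), which yields
\begin{equation*}
  \int_{\rz^3}\!\!\int_{\rz^3} \frac{\rho_\gamma(\bx)\rho_\gamma(\by)}{|\bx-\by|^2}\,\rd\bx\,\rd\by \lesssim \|\rho_\gamma\|_{L^{3/2}(\rz^3)}^2.
\end{equation*}
Finally, the combined Sobolev/Hoffmann-Ostenhof estimate already recorded in \eqref{eq:32} gives $\|\rho_\gamma\|_{L^{3/2}(\rz^3)} \leq C\,\tr(|\bp|\gamma)$. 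Stringing these inequalities together and taking square roots delivers \eqref{eq:hsn}, and in particular shows $X_\gamma\in\gS^2(\gH)$ since the right-hand side is finite for $\gamma\in G$.

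I do not expect any serious obstacle: the only small subtlety is the bookkeeping of the spin summation when passing from the pointwise bound on $\sum_{\sigma,\tau}|\gamma(x,y)|$ in Lemma \ref{lemma1} to a bound on $\sum_{\sigma,\tau}|\gamma(x,y)|^2$, but this is handled by the trivial inequality $\sum_k a_k^2 \leq (\sum_k a_k)^2$ for nonnegative $a_k$. All other ingredients (HLS and the $L^{3/2}$ control of $\rho_\gamma$ by $\tr(|\bp|\gamma)$) are already established in the preceding lemma.
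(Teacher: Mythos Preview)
Your proof is correct. It differs from the paper's only in the order of operations: the paper expands $\gamma=\sum_n\lambda_n|\xi_n\rangle\langle\xi_n|$, applies HLS and Sobolev to each rank-one kernel $\xi_n(x)\overline{\xi_n(y)}$ to obtain $\|X_{|\xi_n\rangle\langle\xi_n|}\|_2\lesssim(\xi_n,|\bp|\xi_n)$, and then sums via the triangle inequality in $\gS^2$. You instead dispose of the eigenfunction sum at the outset via Lemma~\ref{lemma1}, reduce to a single HLS estimate on $\rho_\gamma$, and then quote the Hoffmann-Ostenhof/Sobolev bound \eqref{eq:32} already proved in Lemma~\ref{lemma2}. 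The analytic ingredients (HLS with $\lambda=2$, Sobolev for $|\bp|$) are identical; your version is slightly more economical because it reuses the two preceding lemmas rather than reproving the rank-one case.
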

\begin{proof}
Pick $\xi\in H^{\frac12}(L^2(\rz^3:C^4)$. Then, using the
Hardy-Littlewood-Sobolev inequality followed by the Sobolev inequality we get
\begin{equation}
  \int_{\Gamma}\rd x\int_{\Gamma}\rd y \left|{\xi(x)^*\xi(y)\over|\bx-\by|}\right|^2
\lesssim \||\xi|^2\|_\frac32^2 \lesssim (\xi,|\bp|\xi)^2.
\end{equation}
Taking the root, replacing $\xi$ by $\xi_n$, multiplying by
$\lambda_n$ and summing over $n$ we get by the eigenfunction expansion
\eqref{eigenfunktionen} of $\gamma$
\begin{equation}
  \label{eq:hs}
  \|X_\gamma\|_2\lesssim \sum_n\lambda_n  (\xi_n,|\bp|\xi_n)\lesssim
  \tr(|\bp|\gamma) <\infty.
\end{equation}
Thus $X_\gamma$ is a Hilbert-Schmidt operator and therefore trivially
relatively form compact with respect to $| D_c|$.
\end{proof}

Next we turn to some quantitative operator bounds. The minimum of the
no-pair Hartree-Fock functional is supposed to approximate the atomic
ground-state energy which -- as long as $N\geq \const Z$ and $Z/c <1$
-- is expected to be $O(Z^\frac73)$, the same as the Thomas-Fermi
energy, and also the energy of the test functions proving the upper
bound of the Scott correction in Frank \textit{et al.}
\cite{Franketal2008,Franketal2009} and
\cite{HandrekSiedentop2015}. Thus, it is reasonable to consider only
those $\gamma\in G_Z$ that fulfill
\begin{equation}\label{eff1}
  \tr\left(T_c \gamma\right)=\mathcal{O}(Z^\frac73).
\end{equation}

\begin{lemma}
  \label{lemma4}
  Assume $\gamma\in G^N$ with $N=O(Z)$ and
  $\tr(T_c\gamma)= O(Z^\frac73)$. Then
  $\|X_\gamma\|,\|\phi_\gamma\|=O(Z^\frac53)$.
\end{lemma}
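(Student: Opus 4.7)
My plan is to reduce both bounds to a single trace estimate $\tr(|\bp|\gamma) = O(Z^{5/3})$, which combines the relativistic kinetic hypothesis with elementary trace manipulations. The starting point is the pointwise Fourier-multiplier inequality $|\bp| \leq T_c/c + \sqrt{2T_c}$: squaring $T_c + c^2 = \sqrt{c^2\bp^2+c^4}$ yields $\bp^2 = T_c^2/c^2 + 2T_c$, and subadditivity of the square root gives the claim. Tracing against $\gamma$ and invoking the trace Cauchy--Schwarz bound
\[
\tr(T_c^{1/2}\gamma) = \tr(T_c^{1/2}\gamma^{1/2}\cdot\gamma^{1/2}) \leq \|T_c^{1/2}\gamma^{1/2}\|_2\,\|\gamma^{1/2}\|_2 = \sqrt{\tr(T_c\gamma)\,\tr\gamma},
\]
I arrive at
\[
\tr(|\bp|\gamma) \leq \frac{\tr(T_c\gamma)}{c} + \sqrt{2\,\tr(T_c\gamma)\,N},
\]
which under $\tr(T_c\gamma) = O(Z^{7/3})$, $N = O(Z)$, and $c \geq Z$ (which follows from $Z_k/c<1$) yields $\tr(|\bp|\gamma) = O(Z^{5/3})$, the first summand being $O(Z^{4/3})$ and the second $O(Z^{5/3})$.

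The bound on $\|\phi_\gamma\|$ then follows at once: since $\phi_\gamma$ acts as multiplication, $\|\phi_\gamma\| = \|\phi_\gamma\|_{L^\infty}$. Applying Kato's inequality $|\cdot - x|^{-1} \leq (\pi/2)|\bp|$ pointwise in $x$ to the function $\sqrt{\rho_\gamma}$ gives
\[
\phi_\gamma(x) = \int \frac{\rho_\gamma(y)}{|x-y|}\,\rd y \leq \frac{\pi}{2}\,(\sqrt{\rho_\gamma},|\bp|\sqrt{\rho_\gamma}),
\]
a bound independent of $x$, and the Hoffmann--Ostenhof inequality \eqref{eq:ho2} of Lemma \ref{lemma2} controls the right-hand side by $(\pi/2)\tr(|\bp|\gamma) = O(Z^{5/3})$.

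For $\|X_\gamma\|$ I would simply invoke Lemma \ref{XHS}: the operator norm is dominated by the Hilbert--Schmidt norm, which by \eqref{eq:hsn} is at most a constant multiple of $\tr(|\bp|\gamma) = O(Z^{5/3})$. The substantive step throughout is the trace estimate on $\tr(|\bp|\gamma)$; the role of the constraint $c \geq Z$ is precisely to ensure that the term $\tr(T_c\gamma)/c$ is absorbed by the dominant $Z^{5/3}$-contribution from the Cauchy--Schwarz term, and this is the main technical point.
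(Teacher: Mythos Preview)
Your argument is correct, but it differs from the paper's in two places. First, for $X_\gamma$ the paper does not pass through the Hilbert--Schmidt norm; instead it observes, via the Schwarz inequality on the kernel (Lemma~\ref{lemma1}) and positivity of the Coulomb kernel, the operator inequality $0\le X_\gamma\le\phi_\gamma$, which immediately reduces both norms to $\|\phi_\gamma\|_\infty$. Second, for $\phi_\gamma$ the paper applies Herbst's \emph{massive} Kato inequality \eqref{eq:herbst} (with a free parameter $d$, chosen as $c^{-1/3}$) directly to each natural orbital $\xi_n$ in the expansion, obtaining $\phi_\gamma(\bx)\le (dc)^{-1}\tr\bigl((T_c+\tfrac{\pi^2}{4}d^2c^2)\gamma\bigr)$ in one stroke. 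Your route is more modular: you isolate the single estimate $\tr(|\bp|\gamma)=O(Z^{5/3})$ (via the exact identity $\bp^2=T_c^2/c^2+2T_c$ plus trace Cauchy--Schwarz), then feed it into Lemma~\ref{XHS} for $X_\gamma$ and into massless Kato plus Hoffmann--Ostenhof for $\phi_\gamma$. What you gain is a reusable intermediate bound on $\tr(|\bp|\gamma)$---essentially the content the paper later records as Lemma~\ref{lemmap}---and a proof that only needs $c\gtrsim Z^{2/3}$ rather than $c\sim Z$; what the paper's route gains is directness and the clean structural fact $X_\gamma\le\phi_\gamma$, which is worth knowing independently.
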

\begin{proof}
By (operator) positivity of the Coulomb kernel we have for all
$f\in\gH$ using Schwarz again
\begin{equation}
  0\leq (f,X_\gamma f)
  \leq \int_{\Gamma}\rd x |f(x)|^2 \int_{\rz^3}\rd \by{\rho_\gamma(\by)\over|\bx-\by|}
  =(f,\phi_\gamma f).
\end{equation}
Thus it suffices to show the bound for $\phi_\gamma$.

Now, for every $\bx\in\rz^3$
\begin{align}
  \label{schranke}
  \phi_\gamma(\bx) &\leq \sum_{n} {\lambda_n\over dc} \int_{\Gamma}\rd \by {|\xi_n(y)|^2\over|\bx-\by|} 
  \notag\\
  &\leq \sum_n{\lambda_n\over dc}(\xi_n, (| D_{c}|-c^2+
  {\pi^2\over4}d^2  c^2)\xi_n) \notag\\
 &\leq {1\over
    dc}\tr\left((| D_{c}|-c^2+ \tfrac{\pi^2}8 d^2c^2)\gamma\right)
  \lesssim Z^\frac53
\end{align}
where we use $\sqrt{1-t}\leq 1-t/2$ for $t\leq1$ and the translational invariance of the differential operator and Herbst's massive version of Kato's inequality \cite[Eq. (2.1)]{Herbst1977}
\begin{equation}
  \label{eq:herbst}
  {d\over|\cdot|} \leq  \sqrt{\bp^2+1}-\sqrt{1-\left(\tfrac\pi2 d\right)^2}
\end{equation}
for all $d\in(0,2/\pi)$. We use it after rescaling $x\to cx$,
$p\to \bf{p}/c$ and setting $d:=c^{-\frac13}$. (In passing we note
that the bound can be improved for $d$ close to $2/\pi$: it stays
uniformly positive (see Raynal \textit{et al.}  \cite{Raynaletal1994} and Tix \cite{Tix1997,Tix1998}). However, this is not needed here.)
\end{proof}

Lemmata \ref{lemma2}, \ref{XHS}, and \ref{lemma4} have an immediate
consequence:
\begin{corollary}
  \label{WHFbounded}
  Suppose $\gamma\in G_Z$ and $\tr(T_c\gamma)=O(Z^\frac73)$. Then the
  total Hartree-Fock mean-field potential
  $W_\gamma:=\phi_\gamma-X_\gamma\in \cW_{\frac13}$ is relatively form
  compact with respect to $\sqrt{\bp^2+1}$ and
  $\|W_\gamma\|=O(Z^\frac53)$.
\end{corollary}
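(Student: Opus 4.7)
The corollary is a direct consolidation of the three preceding lemmata, and I would prove it by assembling them with the triangle inequality and the stability of relative form compactness under finite sums.

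For the quantitative part, the triangle inequality together with Lemma \ref{lemma4} immediately gives
\begin{equation*}
\|W_\gamma\|\leq\|\phi_\gamma\|+\|X_\gamma\|=O(Z^{5/3}).
\end{equation*}

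For relative form compactness with respect to $\sqrt{\bp^2+1}$, I would treat the two summands separately. Since both are already known to be bounded by Lemma \ref{lemma4}, it suffices to check compactness of the operator sandwiched between two copies of $(\bp^2+1)^{-1/4}$. Lemma \ref{lemma2} supplies $\sqrt{\phi_\gamma}(\bp^2+1)^{-1/4}\in\gS^8(\gH)$; pairing this with its adjoint yields
\begin{equation*}
(\bp^2+1)^{-1/4}\phi_\gamma(\bp^2+1)^{-1/4}\in\gS^4(\gH),
\end{equation*}
which is in particular compact. Lemma \ref{XHS} gives $X_\gamma\in\gS^2(\gH)$, so $X_\gamma$ is itself compact and its sandwich with the bounded operator $(\bp^2+1)^{-1/4}$ is again compact. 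Taking the difference of two relatively form compact perturbations is relatively form compact, yielding the assertion for $W_\gamma$. The membership $W_\gamma\in\cW_{1/3}$ then follows by unpacking that notation, which records precisely the relative form compactness together with the $O(Z^{5/3})$ bound just obtained.

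There is no genuine obstacle here; the corollary is a bookkeeping step. The only small point worth flagging is that the sign in $W_\gamma=\phi_\gamma-X_\gamma$ is immaterial for both operator norm and form compactness, so the minus sign in front of $X_\gamma$ causes no complication in either estimate.
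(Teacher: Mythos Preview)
Your proposal is correct and matches the paper's approach exactly: the paper states the corollary as an ``immediate consequence'' of Lemmata~\ref{lemma2}, \ref{XHS}, and~\ref{lemma4} without further proof, and your argument spells out precisely the intended assembly via the triangle inequality and the stability of form compactness under sums. Your remark about $\cW_{\frac13}$ is apt; the paper in fact never defines this symbol, so treating it as shorthand for the two properties just established is the only reasonable reading.
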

To further discuss the exchange energy the following bound of the
massless kinetic energy in terms of the massive one is useful.
\begin{lemma}
  \label{lemmap}
  For $\lambda\in(0,1)$ and $c\in\rz_+$ we have
  \begin{equation}
    \label{eq:p}
    |\bp| \leq {1\over c\lambda}\left(\sqrt{c^2|\bp|^2+c^4}-c^2+ c^2\lambda^2\right).
  \end{equation}
\end{lemma}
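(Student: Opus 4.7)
The plan is to observe that both sides of \eqref{eq:p} are (nonnegative) Borel functions of the single self-adjoint operator $|\bp|$, so by the spectral theorem the claimed operator inequality reduces to the pointwise scalar inequality
\begin{equation*}
  t\leq \frac{1}{c\lambda}\left(\sqrt{c^2t^2+c^4}-c^2+c^2\lambda^2\right)\qquad(t\geq 0).
\end{equation*}
Clearing the denominator, this is equivalent to
\begin{equation*}
  c\lambda\,t+c^2(1-\lambda^2)\leq \sqrt{c^2t^2+c^4},
\end{equation*}
and since the left-hand side is nonnegative for $\lambda\in(0,1)$, it suffices to verify this sharpened form.

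For the core estimate I would apply the Cauchy--Schwarz inequality in $\rz^2$ to the vectors $(\lambda,\sqrt{1-\lambda^2})$ and $(ct,c^2)$, which gives
\begin{equation*}
  c\lambda\,t+c^2\sqrt{1-\lambda^2}\leq \sqrt{\lambda^2+(1-\lambda^2)}\cdot\sqrt{c^2t^2+c^4}=\sqrt{c^2t^2+c^4}.
\end{equation*}
Finally I would use that $\sqrt{1-\lambda^2}\geq 1-\lambda^2$ for $\lambda\in(0,1)$ (because $\sqrt{x}\geq x$ on $[0,1]$ with $x=1-\lambda^2$) to replace $\sqrt{1-\lambda^2}$ by the slightly smaller $1-\lambda^2$, yielding the desired scalar inequality.

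There is no real obstacle here: the only nontrivial step is spotting the right Cauchy--Schwarz pairing, and a completely self-contained alternative would be to square both sides of the scalar inequality and check that the resulting quadratic in $t$ has discriminant $-4\lambda^2c^4(1-\lambda^2)\leq 0$ with positive leading coefficient $1-\lambda^2$. Either route furnishes the pointwise bound, which then lifts immediately to the operator inequality \eqref{eq:p} via functional calculus on $|\bp|$.
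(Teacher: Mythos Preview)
Your proof is correct and follows essentially the same two-step structure as the paper's: first establish the sharper scalar bound $\lambda t \leq \sqrt{t^2+1}-\sqrt{1-\lambda^2}$ (the paper obtains it by minimizing the quotient $(\sqrt{t^2+1}-\sqrt{1-\lambda^2})/t$ via calculus, you via Cauchy--Schwarz---these are the same inequality seen two ways, with equality at the same point $t=\lambda/\sqrt{1-\lambda^2}$), and then weaken it using $\sqrt{1-\lambda^2}\geq 1-\lambda^2$. The only cosmetic difference is that the paper first scales to $c=1$ while you carry $c$ throughout.
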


  \begin{proof}
    It suffices to show \eqref{eq:p} for $c=1$, since the general case
    is obtained by scaling $\bp\to c\bp$.
    
    We compute the infimum of quotient of the right and left side:
    \begin{equation}
      f_\lambda(t):= {\sqrt{t^2+1}-\sqrt{1-\lambda^2}\over t}
  \end{equation}
  which is assumed at $t_m=\lambda/\sqrt{1-\lambda^2}$ with
  $f_\lambda(t_m)=\lambda$ which shows
  \begin{equation}
    \label{eq:lambda}
    |\bp|\leq \lambda^{-1}(\sqrt{|\bp|^2+1}-\sqrt{1-\lambda^2})
    \leq \lambda^{-1}(\sqrt{|\bp|^2+1}-1+\lambda^2)
  \end{equation}
where we use $\lambda\in(0,1)$ and therefore $1-\lambda\in(0,1)$.
\end{proof}

Eventually we mention that the exchange energy of the Hartree-Fock
minimizer is of lower order:
\begin{lemma}
  \label{HFaustausch}
  Assume $\gamma\in\gS^1(\gH)$ with $\tr\gamma=O(Z)$,
  $\gamma^2=\gamma$, and $\tr(T_c\gamma)=O(Z^\frac73)$. Then
  \begin{equation}
    \label{X}
    \cX[\gamma]=O(Z^\frac53).
  \end{equation}
\end{lemma}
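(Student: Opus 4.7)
The plan is to reduce the bound on $\cX[\gamma]$ to a bound on $\tr(|\bp|\gamma)$ by means of the massless (Kato) form of the relativistic Hardy inequality, and then to control $\tr(|\bp|\gamma)$ via Lemma \ref{lemmap} together with the two hypotheses on $\gamma$.

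First, I would apply the massless analogue $|\bx|^{-1}\leq \frac{\pi}{2}|\bp|$ of Herbst's inequality \eqref{eq:herbst} not to $\gamma$ itself, but to each ``row'' $\psi_x(y):=\gamma(x,y)$ viewed as a spinor-valued function of $y$. Centering at $\bx$, Herbst yields the pointwise-in-$x$ estimate
\begin{equation*}
  \int_\Gamma \frac{|\gamma(x,y)|^2}{|\bx-\by|}\,\rd y \leq \frac{\pi}{2}(\psi_x, |\bp_y|\psi_x).
\end{equation*}
Integrating in $x\in\Gamma$ and exchanging sums and integrals with the help of the eigenfunction expansion \eqref{eigenfunktionen} (in which, thanks to $\gamma^2=\gamma$, the $\xi_n$ are orthonormal) identifies the right-hand side with $\tr(|\bp|\gamma)$, delivering
\begin{equation*}
  \cX[\gamma]\leq \frac{\pi}{4}\tr(|\bp|\gamma).
\end{equation*}

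Next, I would invoke Lemma \ref{lemmap} traced against $\gamma$: for any $\lambda\in(0,1)$,
\begin{equation*}
  \tr(|\bp|\gamma)\leq \frac{1}{c\lambda}\tr(T_c\gamma) + c\lambda\,\tr\gamma,
\end{equation*}
and then optimise in $\lambda$. The minimiser $\lambda_\ast=\sqrt{\tr(T_c\gamma)/(c^2\tr\gamma)}$ is of order $Z^{2/3}/c$ and so lies in $(0,1)$ in the physically relevant regime $Z<c$; it gives $\tr(|\bp|\gamma)\leq 2\sqrt{\tr(T_c\gamma)\tr\gamma}$. Inserting the assumptions $\tr(T_c\gamma)=O(Z^{7/3})$ and $\tr\gamma=O(Z)$ then yields $\tr(|\bp|\gamma)=O(Z^{5/3})$, which combined with the previous step finishes the proof.

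The only mildly delicate point is the Fubini identification $\int_\Gamma (\psi_x,|\bp_y|\psi_x)\,\rd x = \tr(|\bp|\gamma)$ in the first step. Using $\gamma^2=\gamma$ and cyclicity of the trace one has $\tr(\gamma|\bp|\gamma)=\tr(|\bp|\gamma^2)=\tr(|\bp|\gamma)$, and the double integral collapses, via the orthonormality $\int_\Gamma \overline{\xi_m(x)}\xi_n(x)\,\rd x=\delta_{mn}$ and the fact that $|\bp|$ commutes with complex conjugation, to $\sum_n(\xi_n,|\bp|\xi_n)$. I do not expect any serious obstacle beyond justifying this exchange, which is permitted by the trace-class hypothesis $\gamma\in\gS^1(\gH)$ together with $\tr(T_c\gamma)<\infty$.
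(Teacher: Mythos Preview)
Your proof is correct but follows a genuinely different route from the paper's.

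The paper reduces $\cX[\gamma]$ via the exchange-hole correlation inequality of Mancas \textit{et al.}: integrating that pointwise bound against the two-body density $(\rho_\gamma\otimes\rho_\gamma-|\gamma|^2)/2$ of the Slater determinant and choosing the auxiliary density $\sigma$ proportional to $\rho_\gamma$, it arrives at $\cX[\gamma]\lesssim D[\rho_\gamma]/N + \int\rho_\gamma^{4/3}$. The first term is controlled by Kato's inequality, the second by the maximal inequality and Daubechies' Lieb--Thirring-type bound, both landing on $\tr(|\bp|\gamma)$; the final step via Lemma~\ref{lemmap} is then the same as yours.

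Your argument bypasses all of this machinery by applying Kato's inequality directly to the kernel rows $\psi_x(\cdot)=\gamma(x,\cdot)$, which yields $\cX[\gamma]\leq\tfrac{\pi}{4}\tr(\gamma|\bp|\gamma)=\tfrac{\pi}{4}\tr(|\bp|\gamma)$ thanks to $\gamma^2=\gamma$. This is precisely the mechanism the paper itself uses later in Theorem~\ref{Mulleraustausch} for the M\"uller exchange (there with $\gamma^{1/2}$ in place of $\gamma$, so that $(\gamma^{1/2})^2=\gamma$ closes the trace); you have simply noticed that the same device applies verbatim to a projection. Your approach is shorter and more elementary, avoiding the exchange-hole bound, the maximal function, and Daubechies' inequality altogether; the paper's route, on the other hand, exhibits a Lieb--Oxford-type intermediate estimate that may be of independent interest but is not needed for the bare conclusion of the lemma. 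The Fubini/trace identification you flag as ``mildly delicate'' is indeed routine under the stated hypotheses.
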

\begin{proof}
  By the correlation inequality of Mancas \textit{et al.} ~\cite[Theorem
  1]{Mancasetal2004} (setting $N=2$) and by the estimate
  \begin{equation}
    \label{eq:exmaximal}
    \int_{|\bx-\by|<R(\bx)}
    \rd \by {\sigma(\by)\over|\bx-\by|}\geq {\sqrt[3]{9\pi}\over2}(M\sigma)(\bx)^\frac13\ \mathrm{with}\
    \int_{|\bx-\by|<R(\bx)}\rd \by\; \sigma(\by) = R(\by)
  \end{equation}
  (see \cite[Lemma 2]{Mancasetal2004}) on the exchange hole we have
  \begin{equation}
    \label{eq:austauschloch}
    {1\over|\bx-\by|} \geq \sigma*|\cdot|^{-1}(\bx)+ \sigma*|\cdot|^{-1}(\by)
    - {\sqrt[3]{9\pi}\over2} (M\sigma)(\bx)^\frac13
    - {\sqrt[3]{9\pi}\over2} (M\sigma)(\by)^\frac13
    -D[\sigma]
  \end{equation}
  where $\sigma$ is any non-negative function with
  $\int_{\rz^3}\sigma>1/2$ and $D[\sigma]<\infty$ and $M(\sigma)$ is
  the maximal function of $\sigma$. We integrate the inequality
  against $(\rho_\gamma(\bx)\rho_\gamma(\by)-|\gamma(x,y)|^2)/2$ and get
  \begin{multline}
      D[\rho_\gamma]- \frac12\int_{\Gamma\times\Gamma}\rd x \rd y
    {|\gamma(x,y)|^2\over|\bx-\by|}\\
    \geq 2D(\rho_\gamma,\sigma)N -
    2D(\rho_{\gamma},\sigma)- \sqrt[3]{9\pi} \int_{\rz^3}\rd \bx
    (M\sigma)(x)^\frac13\rho(\bx) -\frac12 D[\sigma](N^2-N).
\end{multline}
  Picking
  $$\sigma:= {2+\sqrt2\over N} \rho_\gamma$$
  yields
  \begin{multline}
    \label{Mancas}
    \int_\Gamma\rd x \int_\Gamma\rd y {|\gamma(x,y)|^2\over|\bx-\by|}\lesssim {D[\rho_\gamma]\over N} + \|M\rho_\gamma\|^\frac14_\frac43\|\rho\|_\frac43\\
    \lesssim  {\tr(|\bp|\gamma)N\over N} +\int_{\rz^3}\rho_\gamma^\frac43 \lesssim \tr(|\bp|\gamma) \leq  {\tr T_c\gamma\over \lambda c} + \lambda cN
      =O(Z^\frac53)
  \end{multline}
  where we use Kato's inequality, the maximal inequality, Daubechies'
  inequality \cite[Inequality (3.4)]{Daubechies1983}, and \eqref{eq:p}
  with $\lambda= Z^{-\frac13}$.
\end{proof}
Using the same argument as in the last line of \eqref{Mancas} on
\eqref{eq:hs} yields
\begin{corollary}
  Assume $\gamma\in G$ with $\tr(T_c\gamma)=\cO(Z^\frac73)$. Then
  \begin{equation}
    \cX[\gamma]= \cO(Z^\frac53).
  \end{equation}
\end{corollary}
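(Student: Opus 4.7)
The plan is to observe that the proof of Lemma \ref{HFaustausch} uses the projection property $\gamma^{2}=\gamma$ in only one place: to integrate the pointwise Mancas-type inequality \eqref{eq:austauschloch} against the two-body kernel $\tfrac12(\rho_\gamma(\bx)\rho_\gamma(\by)-\sum_{\sigma,\tau}|\gamma(x,y)|^{2})$, which must be non-negative for the resulting bound to point in the right direction. For general $\gamma\in G$ this non-negativity still holds: writing $\gamma=\sum_{n}\lambda_{n}|\xi_{n}\rangle\langle\xi_{n}|$ and applying Cauchy-Schwarz in $n$ (as in the proof of Lemma \ref{lemma1}) yields $\sum_{\sigma,\tau}|\gamma(x,y)|^{2}\leq\rho_\gamma(\bx)\rho_\gamma(\by)$, independently of idempotency. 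This is the single structural observation the corollary rests on.

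With that in hand I would repeat the calculation of \eqref{Mancas} verbatim, taking $\sigma:=(2+\sqrt 2)\rho_\gamma/N$ with $N:=\tr\gamma$. The algebraic identity $2(2+\sqrt 2)-\tfrac12(2+\sqrt 2)^{2}=1$ collapses the leading $D[\rho_\gamma]$-contribution to a residual $D[\rho_\gamma]/N$, which is controlled by $\tr(|\bp|\gamma)$ via Hoffmann-Ostenhofs' inequality and $D[\rho_\gamma]\lesssim N\tr(|\bp|\gamma)$; the exchange-hole term $\int(M\rho_\gamma)^{1/3}\rho_\gamma$ is controlled by the Hardy--Littlewood maximal inequality and Daubechies' bound $\int\rho_\gamma^{4/3}\lesssim\tr(|\bp|\gamma)$. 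Altogether this produces the $\gamma$-free inequality
\[
  \cX[\gamma]\lesssim\tr(|\bp|\gamma).
\]

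To finish I apply the last line of \eqref{Mancas} -- Lemma \ref{lemmap} with $\lambda:=Z^{-1/3}$, the choice that balances the two terms in the Scott scaling $c\sim Z$ implicit throughout the paper. Combined with $\tr(T_c\gamma)=\cO(Z^{7/3})$ and the standing $\tr\gamma=\cO(Z)$ this gives
\[
  \tr(|\bp|\gamma)\leq\frac{\tr(T_c\gamma)}{c\lambda}+c\lambda\tr\gamma=\cO(Z^{5/3}),
\]
whence $\cX[\gamma]=\cO(Z^{5/3})$. The only non-routine step is confirming that the Mancas correlation argument survives outside the idempotent case; Lemma \ref{lemma1} supplies exactly the non-negativity required, and every other ingredient is already contained in the text.
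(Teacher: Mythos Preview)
Your proposal is correct and is almost certainly what the paper intends, though you supply far more detail than its one-line justification. The paper's sentence ``using the same argument as in the last line of \eqref{Mancas} on \eqref{eq:hs}'' is cryptic---taken literally, \eqref{eq:hs} bounds $\|X_\gamma\|_2$, not $\cX[\gamma]$---so your reading that the \emph{entire} chain \eqref{Mancas} survives once one observes (via the Cauchy--Schwarz argument of Lemma~\ref{lemma1}) that $\sum_{\sigma,\tau}|\gamma(x,y)|^2\le\rho_\gamma(\bx)\rho_\gamma(\by)$ holds for all $\gamma\in G$, not just projections, is the natural and presumably intended interpretation.

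Two cautions worth recording. First, when $\gamma^2\neq\gamma$ the integration of \eqref{eq:austauschloch} against $\tfrac12(\rho_\gamma\otimes\rho_\gamma-|\gamma|^2)$ does not reproduce the displayed constants verbatim: several occurrences of $\rho_\gamma$ and $N$ become $\rho_{\gamma^2}$ and $\tr\gamma^2$, so the algebraic cancellation you quote is not literal. Under the tacit hypothesis $\gamma\le 1$ (carried by every $G_V$ and $G_{V,N}$ in the paper, though not by the bare set $G$) one has $\rho_{\gamma^2}\le\rho_\gamma$ and $\tr\gamma^2\le N$, the stray terms are dominated by their idempotent counterparts, and the conclusion $\cX[\gamma]\lesssim\tr(|\bp|\gamma)$ still follows. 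The Daubechies bound $\int\rho_\gamma^{4/3}\lesssim\tr(|\bp|\gamma)$ needs $\gamma\le 1$ for the same reason. Second, you correctly import the side condition $\tr\gamma=\cO(Z)$, which the corollary's statement omits but which is indispensable in the final balancing step; both missing hypotheses are evidently implicit in the paper's context.
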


\section{Estimate on exchange energy in relativistic M\"uller theory}

We will consider the M\"uller functional. The basic inequality is the
bound on the ground state energy of of an hydrogen atom by Tix
\cite[Eq. (2.1)]{Tix1998}. Inserting it in the Ritz variational
principle gives the inequality
\begin{equation}
  \label{eq:tix}
  {1\over c\kappa_k} T_c+c
  \geq {1\over|\bx|}
\end{equation}
in $\gH_{0}$ with $\kappa_k:=2/(\pi/2+2/\pi)$.
Next we,
seemingly unmotivated, introduce the no-pair operator of two
gravitating particles in the free picture. Set
\begin{equation}
  \label{eq:muller}
  \begin{split}
  \cC_{\varphi,\lambda_0}[\psi]&:\gQ_0\otimes \gQ_0\to \rz\\
  \psi&\mapsto\tfrac12\langle\psi|[(T_c-\varphi)\otimes 1+1\otimes(T_c-\varphi)-|\bx-\by|^{-1}-\lambda_0]  \psi \rangle
  \end{split}
\end{equation}
with $\lambda_0 \in \rz$. The form $\cC_{\varphi,\lambda_0}$ is
bounded from below for $0\leq Z+1/2 \leq c\kappa_k$ defining the
operator $C_{_\varphi,\lambda_0}$ according to Friedrichs. (See the first step of
Theorem \ref{existence} for more details on the boundedness). Moreover,
\begin{equation}
  \label{eq:ess}
  \sigma_\mathrm{ess}(C_{\varphi,\lambda_0}) = [0,\infty) 
\end{equation}
if we pick
\begin{equation}
  \label{37}
  \lambda_0:=\inf \sigma(C_{0,0}),
\end{equation}
and assume $Z\geq1/2$. This is a special case of Morozov's HVZ theorem
\cite[Theorem 3.1.1]{Morozov2008M} or \cite[Theorem
6]{Morozov2008}. (A slightly more restricted version of the HVZ
theorem not covering the case of attracting particles was already
given by Jakuba\ss a-Amundsen \cite{JakubassaAmundsen2005}.) We will
fix this choice \eqref{37} of $\lambda_0$ and write simply $C_\varphi$
instead of $C_{\varphi,\lambda_0}$.

The relativistic M\"uller functional is the same as the Hartree-Fock
functional except $\gamma$ is replaced by the operator square root
$\gamma^\frac12$ in the exchange term and the renormalizing term
$-\lambda_0\tr\gamma$ is added, i.e., in the molecular case it reads
\begin{equation}
  \label{nopairm}
  \begin{split}
    \cE^\mathrm{M}_{c,\varphi}:& G\to \rz\\
    \gamma \mapsto& \tr\left[(D_{c,\varphi}-\lambda_0)\gamma\right] +\cD[\rho_\gamma] -
    \cX[\gamma^{\frac12}] +\sum_{1\leq k<l\leq
      K}{Z_kZ_l\over|\bR_k-\bR_l|}
  \end{split}
\end{equation}
restricted to $G_{\varphi}$. The renormalizing term makes the infimum
of the M\"uller functional for $\varphi=0$ equal to zero.  This
renormalization of the energy is not needed for the Hartree-Fock
functional but is similar to the non-relativistic setting (Frank et
al.  \cite{Franketal2007}). The same phenomena happens already for the
simple Thomas-Fermi-Dirac functional (Lieb \cite[Section
VI.a]{Lieb1981}).

We have the following estimate on the M\"uller exchange energy in
terms of the relativistic kinetic energy:
\begin{theorem}
  \label{Mulleraustausch}
  Assume $\gamma \in G$. Then for all $d\in(0,2/\pi)$ and $c>0$
  \begin{equation}
    \label{eq:m1}
    \cX[\gamma^\frac12] = {1\over dc}\tr\left[\left(T_c +{\pi^2\over4} d^2c^2\right)\gamma\right].
  \end{equation}
  Furthermore, if $\tr (T_c\gamma) = \cO(Z^\frac73)$, then
  \begin{equation}
     \cX[\gamma^\frac12] =O(Z^\frac53).
  \end{equation}
\end{theorem}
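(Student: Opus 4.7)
The plan is to insert the spectral decomposition \eqref{eigenfunktionen} into the M\"uller exchange, apply Herbst's massive Kato inequality \eqref{eq:herbst} to bound the Coulomb factor in a single variable, and exploit the orthonormality of the natural orbitals $\{\xi_n\}$ to reduce the resulting double sum to a one-body trace. This is the key algebraic advantage of having $\gamma^{1/2}$ (rather than $\gamma$) in the exchange term: no exchange-hole detour of the kind carried out in Lemma \ref{HFaustausch} is required.

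Concretely, from \eqref{eigenfunktionen} the integral kernel of $\gamma^{1/2}$ is $\gamma^{1/2}(x,y)=\sum_n\sqrt{\lambda_n}\xi_n(x)\overline{\xi_n(y)}$, whence
\begin{equation*}
  \cX[\gamma^{1/2}] = \tfrac12\int_\Gamma\!\int_\Gamma \frac{|\gamma^{1/2}(x,y)|^2}{|\bx-\by|}\,\rd x\,\rd y.
\end{equation*}
The massive Herbst inequality, after the rescaling $\bp\to\bp/c$, $\bx\to c\bx$ employed in the proof of Lemma \ref{lemma4} and the pointwise estimate $\sqrt{1-(\pi d/2)^2}\geq 1-(\pi d/2)^2$, yields
\begin{equation*}
  |\bx-\by|^{-1}\leq \tfrac{1}{dc}\bigl(T_c^{(y)}+\tfrac{\pi^2}{4}d^2c^2\bigr)\qquad \text{on }L^2(\Gamma_y),\ d\in(0,2/\pi),
\end{equation*}
with $T_c^{(y)}$ acting in the $\by$-variable only (translation invariance of $T_c$ shifts the singularity from $0$ to $\bx$). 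Substituting this into the exchange integral and expanding $\gamma^{1/2}$ in the natural orbitals, the $\bx$-integration produces $(\xi_m,\xi_n)=\delta_{nm}$ in each term of the resulting double sum $\sum_{n,m}\sqrt{\lambda_n\lambda_m}(\cdots)$, killing the off-diagonal contributions. The surviving diagonal terms collapse to $\sum_n\lambda_n(\xi_n,T_c\xi_n)=\tr(T_c\gamma)$ and $\sum_n\lambda_n=\tr\gamma$, which combine to give \eqref{eq:m1}; in my bookkeeping the constant is $\tfrac{1}{2dc}$, and I read the displayed ``$=$'' as ``$\leq$'' since the right-hand side depends on $d$.

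For the second assertion, I would optimize over $d$: by AM--GM,
\begin{equation*}
  \tfrac{1}{dc}\tr(T_c\gamma)+\tfrac{\pi^2 dc}{4}\tr\gamma\ \geq\ \pi\sqrt{\tr(T_c\gamma)\,\tr\gamma},
\end{equation*}
with equality at $d_\star=\tfrac{2}{\pi c}\sqrt{\tr(T_c\gamma)/\tr\gamma}$. Using $\tr(T_c\gamma)=\cO(Z^{7/3})$ together with $\tr\gamma=\cO(Z)$ (implicit in the $G^N$ setting with $N=\cO(Z)$), the right-hand side is $\cO(Z^{5/3})$ in the relativistic regime $c\asymp Z$; one should also verify that $d_\star\asymp Z^{-1/3}$ eventually lies in $(0,2/\pi)$, which is automatic for large $Z$. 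Alternatively, one may simply take $d=Z^{-1/3}$ and check that both terms are separately $\cO(Z^{5/3})$. The main subtlety is the orthonormality-driven cancellation: it hinges on precisely the square-root structure in $\cX[\gamma^{1/2}]$, since the same manipulation applied to $\cX[\gamma]$ would leave $\tr(T_c\gamma^2)$ in place of $\tr(T_c\gamma)$, a strictly weaker bound away from the projection case. Absolute convergence of the double sums, needed to swap integration and summation, follows from Cauchy--Schwarz in the weights $\sqrt{\lambda_n}$ and the assumption $\gamma\in G$.
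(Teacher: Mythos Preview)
Your argument is correct and matches the paper's: both apply Herbst's massive Kato inequality \eqref{eq:herbst} in one variable to turn $|\bx-\by|^{-1}$ into $T_c+\mathrm{const}$, then integrate out the other. The only difference is cosmetic. The paper avoids the eigenfunction expansion entirely---it sets $g_y(x):=\gamma^{1/2}(x+y,y)$, applies Herbst to $g_y$, and recognizes $\int_\Gamma(g_y,(T_c+C)g_y)\,\rd y=\|(T_c+C)^{1/2}\gamma^{1/2}\|_{\gS^2}^2=\tr[(T_c+C)\gamma]$ directly from the Hilbert--Schmidt kernel identity---whereas you reach the same trace via the orthonormality collapse $\sum_{n,m}\sqrt{\lambda_n\lambda_m}(\cdots)\to\sum_n\lambda_n(\cdots)$. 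Your route makes the reduction to a one-body quantity more transparent; the paper's is a line shorter and sidesteps the sum/integral interchange. Your bookkeeping (constant $\tfrac{1}{2dc}$ and reading ``$=$'' as ``$\leq$'') is the correct reading of the statement, and for the second claim the paper likewise just takes $d=Z^{-1/3}$, with the assumptions $\tr\gamma=\cO(Z)$ and $c\asymp Z$ left implicit in the surrounding context.
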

\begin{proof}
  For fixed $\by\in\rz^3$ we set $g_y(x):=\gamma^\frac12(x+y,y)$. Then
  \begin{align}
    \label{eq:m2}
    \cX[\gamma^\frac12]& = \frac12 \int_{\Gamma}\rd y
    \int_{\Gamma}\rd x{|g_y(x)|^2\over|\bx|} \notag\\
    &\leq {1\over 2dc} \int_{\Gamma}\rd y
    (g_y,\left(T_c+ \left({\pi^2 d^2 c^2\over4}\right)g_y\right)
   = {1\over dc}\tr\left[\left(T_c+ {\pi^2 d^2 c^2\over4}\right)\gamma\right]
  \end{align}
  where use again Herbst's massive Kato inequality similar to the
  second estimate in \eqref{schranke}.  This proves the first claim.

  Picking $d:= Z^{-\frac13}$ shows the second claim.
\end{proof}

\begin{theorem}
  \label{Mulleraustausch-free=picture}
  Assume $\gamma \in G_0$. Then
  \begin{equation}
    \label{eq:m1f}
    \cX[\gamma^\frac12] = \frac12\tr\left[\left({1\over c\kappa_K}T_c +c\right)\gamma\right].
  \end{equation}
  \end{theorem}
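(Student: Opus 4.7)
The plan is to run the argument of Theorem \ref{Mulleraustausch} essentially verbatim, but replace Herbst's massive Kato inequality by Tix's sharper bound \eqref{eq:tix}. The hypothesis $\gamma\in G_0$ (rather than merely $\gamma\in G$) is exactly what makes Tix's inequality applicable, since it forces the relevant slice of $\gamma^{1/2}$ to lie in the free positive-energy space $\gH_0$.

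First, I would change variables in the double integral defining $\cX[\gamma^{1/2}]$ as in the proof of Theorem \ref{Mulleraustausch}: for fixed $y=(\by,\tau)$, substitute $\bu=\bx-\by$ and write
\begin{equation*}
  \cX[\gamma^{\frac12}] = \tfrac12\int_\Gamma \rd y \int_\Gamma \rd u \frac{|g_y(u)|^2}{|\bu|}, \qquad g_y(u):=\gamma^{\frac12}(u+y,y).
\end{equation*}

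Next I would verify that $g_y \in \gH_0$ for a.e.\ $y$. Since $\gamma\in G_0$ means $\gamma\leq P_0$, one has $\gamma=P_0\gamma P_0$, whence $\gamma^{1/2}=P_0\gamma^{1/2}$; thus $\gamma^{1/2}(\cdot,y)\in\gH_0$ as a function of its first argument. Because $P_0=\chi_{(-c^2,\infty)}(D_c)$ commutes with spatial translations, the shifted function $g_y$ still lies in $\gH_0$. The trace condition $T_c^{1/2}\gamma T_c^{1/2}\in\gS^1$ moreover guarantees $T_c^{1/2}\gamma^{1/2}\in\gS^2$, so $g_y$ actually belongs to the form domain $\gQ_0$ for a.e.\ $y$, which justifies applying Tix's inequality pointwise and using Fubini in what follows. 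Tix's bound \eqref{eq:tix}, combined with translation invariance, gives
\begin{equation*}
  \int_\Gamma \rd u \,\frac{|g_y(u)|^2}{|\bu|} \leq \bigl\langle g_y, \bigl(\tfrac{1}{c\kappa_k} T_c + c\bigr) g_y\bigr\rangle.
\end{equation*}

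Finally, I would integrate over $y$ and use translation invariance of $T_c$ to undo the shift: by Fubini,
\begin{equation*}
  \int_\Gamma \rd y\,\langle g_y, T_c g_y\rangle
  = \|T_c^{\frac12}\gamma^{\frac12}\|_2^2 = \tr(T_c\gamma),
  \qquad \int_\Gamma \rd y\,\|g_y\|^2 = \|\gamma^{\frac12}\|_2^2 = \tr\gamma,
\end{equation*}
so that the two pieces combine to $\tfrac12\tr[(\tfrac{1}{c\kappa_k}T_c + c)\gamma]$, which is the stated bound.

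The only genuine subtlety is the measurability/regularity step ensuring that $g_y\in\gQ_0$ for a.e.\ $y$, so that Tix's operator inequality can be invoked pointwise in $y$ before integrating; this is a direct consequence of $\gamma\in G_0$ together with $T_c^{1/2}\gamma T_c^{1/2}\in\gS^1$, and is arguably the only thing distinguishing this proof from that of Theorem \ref{Mulleraustausch}.
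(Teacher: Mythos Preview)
Your argument is correct and is essentially identical to the paper's own proof: define the shifted slice $g_y$, observe that translation invariance of $P_0$ keeps $g_y\in\gH_0$, apply Tix's inequality \eqref{eq:tix} in place of Herbst's, and then integrate over $y$ to recover the trace. If anything, you supply more detail than the paper does on why $g_y\in\gQ_0$; the paper handles this with a one-line parenthetical about translation invariance.
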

\begin{proof}
  We start as in the proof of Theorem \ref{Mulleraustausch}: For fixed
  $\by\in\rz^3$ we set $g_y(x):=\gamma^\frac12(x+y,y)$. (Note that
  although $\gH_0$ is not invariant under dilations it is
  invariant under translations.) Then by \eqref{eq:tix} dividing by
  $2$ and using the second assumption in the last step, we get
  \begin{align}
    \label{eq:m2a}
    \cX[\gamma^\frac12] &= \frac12 \int_{\Gamma}\rd y
    \int_{\Gamma}\rd x{|g_y(x)|^2\over|\bx|} \notag\\
    &\leq \frac12\int_{\Gamma}\rd y
    \left(g_y,\left({1\over\kappa_kc}T_c+ c\right)g_y\right)
   = \frac12\tr\left[\left({1\over c\kappa_k }T_c+  c\right)\gamma\right].
  \end{align}
This ends the proof.
\end{proof}

\section{Existence of a groundstate of the atomic M\"uller functional in the free picture}

The existence of a ground-state for the Chandrasekhar kinetic energy
was proven by \cite{Tiefenbeck2010} and also addressed by Chen
\cite{Chen2013}. We will follow the strategy of \cite{Tiefenbeck2010}
but will extend it to the free picture in the diction of Sucher and to
the molecular case.

Since $\gamma\in\gS^1(\gH_0)$ implies that
$\gamma^\frac12\in\gS^2(\gH_0)$ its integral kernel
$u(x,y):=\gamma^\frac12(x,y)$ is in $\gQ_0\otimes\gQ_0$.  This
allows to recast the M\"uller functional as
\begin{equation}
  \label{eq:D2}
  \cE^{\rm M}_{c,\varphi}(\gamma)
  =\frac{1}{2}\left( u,C_\varphi - \lambda_0)u\right)_{\gH_0\otimes \gH_0}
  +\cD[\rho_\gamma] +R.
\end{equation}

\begin{theorem}
  \label{existence}
  Let $(Z+\tfrac{1}{2})<\kappa_kc$ and $Z\geq1/2$. Then
  $\cE^\mathrm{M}_{c,\varphi}$ has a minimizer $\gamma_*\in
  G_{0,N}$.
\end{theorem}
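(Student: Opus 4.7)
The plan is to apply the direct method to the reformulation \eqref{eq:D2}. The first step is to establish coercivity: combining Theorem \ref{Mulleraustausch-free=picture} with Tix's inequality \eqref{eq:tix} applied to the nuclear attraction, together with the hypothesis $Z+\tfrac12\leq c\kappa_k$ (which leaves a strictly positive fraction of $T_c$ after absorbing both the nuclear singularity in $D_{c,\varphi}$ and the electron--electron singularity inside $C_\varphi$), I obtain a bound of the form
$$\cE^\mathrm{M}_{c,\varphi}(\gamma)\geq \eta\,\tr(T_c\gamma)-\tilde C N+\cD[\rho_\gamma]+R$$
for some $\eta>0$. Hence the functional is bounded below on $G_{0,N}$ and any minimizing sequence $(\gamma_n)$ has uniformly bounded kinetic energy and direct energy.

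Next, Banach--Alaoglu in $\gS^1(\gH)$, applied to $\gamma_n$ and to $T_c^{1/2}\gamma_n T_c^{1/2}$, yields a subsequence converging weakly-$*$ to some $\gamma_*$ with $T_c^{1/2}\gamma_* T_c^{1/2}\in\gS^1(\gH)$. The convex constraints $0\leq\gamma_*\leq P_0$ and $\tr\gamma_*\leq N$ pass to this limit.

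The third step, which I expect to be the main obstacle, is to rule out mass loss at infinity. By the choice \eqref{37} of $\lambda_0$ the free M\"uller functional has infimum zero, and by Morozov's HVZ theorem this is the bottom \eqref{eq:ess} of the essential spectrum of $C_\varphi-\lambda_0$. I would first establish a strict binding inequality $\inf\cE^\mathrm{M}_{c,\varphi}<0$ via a trial state concentrated near the nucleus, exploiting $Z\geq 1/2$, and then apply a concentration-compactness analysis to the Hilbert--Schmidt kernel $u_n=\gamma_n^{1/2}\in \gQ_0\otimes\gQ_0$ (or equivalently to $\rho_{\gamma_n}$): any positive fraction of mass that escapes to infinity decouples from $\varphi$ and contributes asymptotic energy $\geq 0$, contradicting the strict binding. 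This yields $\tr\gamma_n\to\tr\gamma_*=N$ and, for positive trace-class operators, upgrades weak-$*$ to strong $\gS^1$ convergence of $\gamma_n$, giving also $\rho_{\gamma_n}\to\rho_{\gamma_*}$ in $L^1\cap L^{3/2}(\rz^3)$.

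The final step is weak lower semicontinuity term by term. The kinetic part $\tr(D_c\gamma_n)$ is weakly lsc; $-\tr(\varphi\gamma_n)$ converges along the subsequence by relative form compactness of the Coulomb attraction with respect to $T_c$. The direct term passes to the limit because strong $L^{3/2}$ convergence of the densities combined with Hardy--Littlewood--Sobolev gives $\cD[\rho_{\gamma_n}]\to\cD[\rho_{\gamma_*}]$. For the exchange term, strong $\gS^1$ convergence of $\gamma_n\geq 0$ lifts to $\gamma_n^{1/2}\to\gamma_*^{1/2}$ in $\gS^2(\gH)$; the uniform Hilbert--Schmidt control from Lemma \ref{XHS} then allows me to pass to the limit in the Coulomb weighted double integral, obtaining $\cX[\gamma_n^{1/2}]\to\cX[\gamma_*^{1/2}]$. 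Assembling these convergences identifies $\gamma_*\in G_{0,N}$ as a minimizer. The technical core is the binding/no-escape step together with the $\gS^2$ continuity of the square root that handles the non-quadratic exchange.
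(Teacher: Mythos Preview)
Your outline follows a different and heavier route than the paper. You aim for \emph{strong} $\gS^1$ convergence of a minimizing sequence via a strict binding inequality and concentration--compactness, and then pass to the limit term by term. The paper instead works entirely with \emph{weak} convergence and exploits the two-body reformulation \eqref{eq:D2} much more directly: it splits $C_\varphi=C^++C^-$, notes that by Morozov's HVZ theorem \eqref{eq:ess} (this is where the hypothesis $Z\geq1/2$ and the renormalization \eqref{37} enter) the negative part $C^-$ is compact, and hence $(u_n,C^-u_n)\to(u_*,C^-u_*)$ while $(u_n,C^+u_n)$ and $\cD[\rho_{\gamma_n}]$ are weakly lower semicontinuous by Schwarz. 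No dichotomy analysis, no strict binding, no strong convergence is needed. In particular the paper does \emph{not} prove $\tr\gamma_*=N$; it explicitly leaves this open after the proof. So the mass-conservation step you single out as ``the main obstacle'' is simply absent from the paper's argument---and unnecessary for the theorem as stated, since $G_{0,N}$ only requires $\tr\gamma\leq N$.

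On the technical side your sketch has two soft spots. First, Lemma~\ref{XHS} controls $X_\gamma$ for $\gamma\in G$ (trace class), not $X_{\gamma^{1/2}}$; to pass $\cX[\gamma_n^{1/2}]$ to the limit you would really need the kinetic control $\sup_n\|(T_c^{1/2}\otimes1)u_n\|<\infty$ together with a Hardy-type argument, or---as the paper does---absorb the exchange into $C_\varphi$ and never treat it separately. Second, the strict binding and the full concentration--compactness dichotomy for this non-quadratic functional are asserted rather than carried out; these are substantial arguments in their own right. If your programme can be completed it would yield more than the paper (namely $\tr\gamma_*=N$), but for the stated result the paper's HVZ-compactness shortcut is both shorter and cleaner.
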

\begin{proof}
We split the proof into the following steps.

{\bf Step 1. Some properties of minimizing sequences.}  To study
minimizing sequences, we first bound the M\"uller functional from
below.

Note that for any $\gamma\in G_{\varphi,N}$,
$D_{c,0}\gamma= T_c\gamma$. Since $\cD$ is a positive quadratic form
(Onsager's lemma), we know that
\begin{equation}\label{eq:lowbound}
  \begin{split}
  \cE^{\rm M}_{c,\varphi}(\gamma) & \geq  \tr(T_c\gamma)
 - \left(\frac\kappa{\kappa_c} +\frac1{2\kappa_kc}\right)\tr (T_c\gamma)
        -[\lambda_0+ c(Z+\tfrac12)]\tr\gamma\\
                                  &=\left[ 1-{1\over\kappa_kc}\left(Z+\tfrac12\right)\right]\tr T_c\gamma -
                                    (\lambda_0+cZ+\tfrac12)N.
  \end{split}
\end{equation}

As $0\in G_{0,N}$, we know $E_{c,\varphi}^{\rm M}(G_{0,N})\leq
0$. Thus, there exists a minimizing sequence $\gamma_n$ in $G_{0,N}$
such that $\cE^{\rm M}_{c,\varphi}[\gamma_n]\leq 0$. From
\eqref{eq:lowbound}, we infer
\begin{equation}
  \tr(T_c\gamma_n) \leq
  { (\lambda_0+cZ+\tfrac12)N\over 1-{Z+\frac12\over\kappa_kc}}.
\end{equation}
This implies that $T_c^\frac12\gamma_n T_c^\frac12$ is uniformly
bounded in $\mathfrak{S}^1(\gH_0)$. Thus, up to subsequences, there
exists a density matrix $\gamma_*$ such that
$T_c^\frac12\gamma_* T_c^\frac12\in \mathfrak{S}^1(\gH_0)$ and
\begin{align}\label{eq:weak-converg}
    T_c^\frac12\gamma_n T_c^\frac12 \rightharpoonup  T_c^\frac12\gamma_* T_c^\frac12\qquad \textrm{in }\mathfrak{S}^1(\gH_0).
\end{align}

\medskip 

Obviously $\gamma*$ is a candidate for a minimizer. To show that this
is true, we are going to use \eqref{eq:D2} and the kernel of the
operator $\gamma_n^\frac12$. Next, we are going to study the properties of
$\gamma_n^\frac12$. Since
\begin{equation}
  \label{44}
    \|T_c^\frac12\gamma_n^\frac12\|^2_{\mathfrak{S}^2(\gH_0)}=\tr(T_c^\frac12\gamma_nT_c^\frac12) =\tr(T_c\gamma_n)
  \end{equation}
  the sequence $T_c^\frac12\gamma_N^\frac12$ is bounded in
  $\gS^2(\gH_0)$ and therefore -- possibly extracting a subsequence
  which in abuse of notation is written again as $\gamma_n$ -- is also
  weakly convergent to some $T_c^\frac12 \gamma_{**}T_c^\frac12$ for
  some $\gamma_{**}$ which, however, again by \eqref{44}, equals
  $\gamma_*$ since $T_c>0$, i.e.,
\begin{equation}
    T_c^\frac12\gamma_n^\frac12 \rightharpoonup   T_c^\frac12\gamma_*^\frac12\qquad \textrm{in }\mathfrak{S}^2(\gH_0).
\end{equation}
We write $u_n$ and $ u_*$, both in $\gH_0\otimes
\gH_0$, for the kernels of $\gamma_n^\frac12$ and
$\gamma_*^\frac12$ respectively. Then
$(T_c^\frac12\otimes1)u_n$ and $(T_c^\frac12\otimes1)
u_*$ are the integral kernels of $ T_c^\frac12\gamma_n^\frac12$ and $
T_c^\frac12\gamma_*^\frac12$ respectively, and
\begin{equation}
\|(T_c^\frac12\otimes1)u_n\|_{\gH_0\otimes \gH_0}=\| T_c^\frac12\gamma_n^\frac12\|_{\mathfrak{S}^2(\gH_0)}<\infty.
\end{equation}
For any operator $G\in \mathfrak{S}^2(\gH_0)$ the associated
integra. kernel $g$ is in $\gH_0\otimes \gH_0$ and we have
\begin{equation}
  \left((T_c^\frac12\otimes1)u_n,g\right)=\tr [(T_c^\frac12\otimes1)\gamma_n^\frac12G]\to \tr[(T_c^\frac12\otimes1)\gamma_*^\frac12G]=\left((T_c^\frac12\otimes1)u_*,g\right).
\end{equation}
This implies
\begin{equation}
   (T_c^\frac12\otimes1)u_n \rightharpoonup  (T_c^\frac12\otimes1)u_*\qquad\textrm{in } \gH_0\otimes \gH_0.
\end{equation}

Next we note that $\cD[\rho_{\gamma_n}]$ is bounded. Since $\cD$ is a
scalar product there is a $\gamma_{***}$ such that -- possibly by
extracting again a subsequence --
\begin{equation}
  \label{eq:DD}
  D(\sigma,\rho_{\gamma_n})\to D(\sigma,\rho_{***})
\end{equation}
for all $\sigma$ with $D[\sigma]<\infty$. That $\gamma_{***}=\gamma_*$
follows from
$D(\sigma,\rho_{\gamma_*}) =C\tr ( |\bp|^{-2}(\sigma)\circ \gamma_*)$.

{\bf Step 2: $G_{\varphi,N}$ is weakly closed.} From
\eqref{eq:weak-converg}, it is easy to see that
\begin{align*}
  \tr \gamma_*=\|\gamma_*\|_{\mathfrak{S}^1(\gH_0)} \leq \liminf_{n\to \infty}\tr\gamma_n\leq N.
\end{align*}
On the other hand, for any function $f\in \gH_0$, we have
\begin{align*}
    \left(f, \gamma_* f\right)=\lim_{n\to \infty}(f,\gamma_n f).
\end{align*}
This and $0\leq \gamma_n\leq P_0$ for any $n\geq 1$ show that
$0\leq \gamma_*\leq P_0$. Thus
\begin{align*}
    \gamma_*\in G_{0,N}.
\end{align*}

\medskip

\noindent{\bf Step 3. Passing to the limit in the energy.} Now, we are
going to show that for the minimizing sequence $\gamma_n$ constructed
above, we have
\begin{equation}
  \inf E_{c,\varphi}^{\rm M}(G_{0,N})= \lim_{n\to \infty} \cE^{\rm M}_{c,\varphi}(\gamma_n) \geq  \cE^{\rm M}_{c,\varphi}(\gamma_*)
\end{equation}
which would prove the theorem. To this end we will make use of the
recasting \eqref{eq:D2} of the M\"uller functional treating it term by
term.

{\bf Step 3.1. Passing the limit for the term
  $\left( u_n,C_\varphi u_n\right)_{\gH_0\otimes\gH_0}$.} We split
$C_\varphi$ into its positive and negative parts $C^+\geq0$ and
$C^-\leq0$, i.e.,
\begin{equation}
 C_\varphi = C^++C^-
\end{equation}
Since $(u,C^+u)$ is a nonnegative quadratic form the Schwarz inequality implies 
\begin{equation}
  \label{+}
   \lim\inf_{n\to\infty}(u_n,C^+ u_n) \geq (u_*,C^+u_*). 
\end{equation}
By the HVZ theorem $C^-\in\gS^\infty(\gH_0)$, which turns the weak convergence into strong convergence, i.e.,
\begin{equation}
  \label{-}
  \liminf_{n\to\infty}(u_n,C^-u_n)=(u_*,C^-u_*).
\end{equation}
Combining \eqref{+} and \eqref{-} yields the wanted inequality for the
first term:
\begin{align}\label{eq:D2-converg}
     \liminf_{n\to \infty}\left( u_n,C_\varphi u_n\right)_{P_0 \gH_0 \otimes P_0\gH_0}\geq \left( u_*,C_\varphi u_*\right)_{\gH_0 \otimes\gH_0}.
\end{align}

{\bf Step 3.2. Passing to the limit for the term
  $\cD(\rho_{\gamma_n})$.} We finally consider the term
$\cD(\rho_{\gamma_n})$. Since it is a scalar product, the Schwarz
inequality implies
$\liminf_{n\to\infty}D[\rho_n]\geq D[\rho_{\gamma_*}]$.

{\bf  Step 4. Conclusion.}
Since $R$ does not depend on $\gamma$, we merely have to sum the
inequalities of the the second and third step to obtain the wanted
result.
\end{proof}

We conclude this section by remarking that we expect the
$\tr\gamma_*=N$, if $N\leq Z$ and possibly beyond although we have to
leave this question open.

\section{Stability of the second kind for the relativistic M\"uller functional}
Theorem \ref{Mulleraustausch} allows us to show stability of the
second kind, also known as stability of matter, for the M\"uller
functional in the free picture ($W=\varphi$):
\begin{theorem}
  \label{Stabilitaet}
  Assume that $0<Z_1/c,...,Z_K/c\leq \kappa< 2/\pi$,
  $\bR_1,...,\bR_K\in\rz^3$ pairwise different, and
  $\gamma\in G_{\varphi,N}$ with $N\in \nz$. Then
  \begin{equation}
    \cE^\mathrm{M}_{c,\varphi} [\gamma] \geq   -\left(c^2 + {1\over 2-\kappa\pi}-\lambda_0\right)N.
  \end{equation}
\end{theorem}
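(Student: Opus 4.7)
The plan is to combine the exchange bound of Theorem~\ref{Mulleraustausch} with a relativistic no-pair stability-of-matter estimate for the surviving Hartree-type functional. Since $\gamma\le P_0$ gives $D_{c,0}\gamma=T_c\gamma$, I would first rewrite
\[
\cE^\mathrm{M}_{c,\varphi}(\gamma)=\tr\bigl[(T_c-\varphi)\gamma\bigr]+\cD[\rho_\gamma]+R-\cX[\gamma^{\frac12}]-\lambda_0\tr\gamma.
\]

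Next, I would apply Theorem~\ref{Mulleraustausch} with a parameter $d\in(0,2/\pi)$ to be tuned, obtaining
\[
-\cX[\gamma^{\frac12}]\ge -\tfrac{1}{dc}\tr(T_c\gamma)-\tfrac{\pi^2 dc}{4}\tr\gamma.
\]
This reduces the problem to bounding from below the Hartree-type expression
\[
\Bigl(1-\tfrac{1}{dc}\Bigr)\tr(T_c\gamma)-\int\varphi\rho_\gamma+\cD[\rho_\gamma]+R,
\]
modulo terms proportional to $\tr\gamma\le N$ which will be absorbed into the final constant.

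The core estimate is the multi-center relativistic stability. Here I would use Herbst's inequality $\kappa/|\bx-\bR_k|\le(\pi\kappa/2)|\bp|$ in its sharp rearranged form $|\bp|\le\tfrac{2}{2-\kappa\pi}(|\bp|-\kappa/|\bx-\bR_k|)$ -- which is the source of the constant $\tfrac{1}{2-\kappa\pi}$ in the statement -- together with the kinetic bound $T_c+c^2\ge c|\bp|$ (reducing to a Chandrasekhar setting) and a Baxter/Fefferman--de~la~Llave-type electrostatic decomposition of $\cD[\rho_\gamma]+R-\int\varphi\rho_\gamma$ into pieces localized near single nuclei, with the positivity of the nuclear-nuclear repulsion $R$ compensating the $K$-dependence that would appear in any naive per-nucleus bound on $\varphi$.

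Finally, assembling these ingredients, optimizing $d\in(0,2/\pi)$, and using $\tr\gamma\le N$ to handle all $\tr\gamma$-proportional contributions (in particular shifting $-\lambda_0\tr\gamma$ to produce the $+\lambda_0 N$ term in the statement) yields the stated bound. The main obstacle is precisely the multi-center stability step: while the atomic case $K=1$ follows essentially at once from Herbst's inequality combined with the kinetic bound, handling $K\ge 2$ without any $K$-dependent loss requires the subtle electrostatic cancellation with $R$.
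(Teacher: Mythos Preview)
Your overall plan---bound $\cX[\gamma^{1/2}]$ via Theorem~\ref{Mulleraustausch}, then establish a multi-center stability estimate for the remaining Hartree-type terms, then optimize $d$---is exactly the paper's strategy. The difference lies in the step you yourself flag as ``the main obstacle.''

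The paper does \emph{not} construct the multi-center stability from a Baxter/Fefferman--de~la~Llave decomposition. It simply invokes the Lieb--Yau stability-of-matter inequality \cite{LiebYau1988}: for $Z_k\leq\tfrac{2}{\pi}c$,
\[
\sum_{n=1}^N\Bigl(c|\bp_n|-\sum_k\frac{Z_k}{|\bx_n-\bR_k|}\Bigr)+\sum_{m<n}\frac1{|\bx_m-\bx_n|}+R\ \geq\ 0
\]
on $\gQ^N$. Rescaling by $\tfrac{\kappa\pi}{2}$ and passing from Slater determinants to general $\gamma\in G_{0,N}$ via Lieb's variational principle \cite{Lieb1981V,Bach1992} gives
\[
\tr\bigl[(\tfrac{\kappa\pi}{2}c|\bp|-\varphi)\gamma\bigr]+\cD[\rho_\gamma]-\cX[\gamma]+R\ \geq\ 0,
\]
which is the paper's \eqref{eq:stabhf}. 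This one citation dispatches precisely the $K$-independent electrostatic cancellation you identify as the hard part. The paper then writes $\cE^{\mathrm M}=\cE^{\mathrm{HF}}+\cX[\gamma]-\cX[\gamma^{1/2}]-\lambda_0\tr\gamma$, drops $\cX[\gamma]\geq0$, converts $c|\bp|$ to $T_c$ via Lemma~\ref{lemmap} (with $\lambda=1$), and finally applies Theorem~\ref{Mulleraustausch} with $d=\tfrac{1}{2c(1-\kappa\pi/2)}$. Note in particular that the constant $\tfrac{1}{2-\kappa\pi}$ arises from this \emph{choice of $d$} in the exchange bound after Lieb--Yau has consumed the fraction $\tfrac{\kappa\pi}{2}$ of the kinetic energy---not, as you suggest, from a rearranged Herbst inequality applied directly to $\varphi$.

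Your proposed route is not wrong in principle, but it amounts to reproving (a Hartree version of) Lieb--Yau from scratch, and you give no details of how the Baxter-type localization would be executed here. Citing \cite{LiebYau1988} closes the gap in one line.
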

\begin{proof} We start by reminding of the inequality
\begin{multline}
  \label{LY}
  \sum_{n=1}^N \left(c|\bp_n| - \sum_{k=1}^K{Z_k\over|\bx_n-\bR_k|}\right)
  +\sum_{1\leq m<n\leq N}{1\over|\bx_m-\bx_n|}+\sum_{1\leq k<l\leq K} {Z_kZ_l\over|\bR_k-\bR_l|} \geq 0
\end{multline}
for $0\leq Z_1,...,Z_K\leq \frac2\pi c$ on $\gQ^N$ (Lieb and Yau
\cite{LiebYau1988}).

Thus rescaling \eqref{LY} by $\bx\to \frac{\kappa\pi}{2}\bx$ with a
Slater determinant with orbitals $\xi_1,...,\xi_n$ we obtain for
$\gamma = \sum_{n=1}^N|\xi_n\rangle\langle \xi_n|$ and therefore
according to Lieb \cite{Lieb1981V,Lieb1981E} or Bach \cite{Bach1992} --
adapted to the relativistic case -- for general
$\gamma\in G_{0,N}$ that
\begin{equation} \label{eq:stabhf}
    \tr[(\frac{\kappa\pi}{2}c|\bp|-\varphi(\bx))\gamma) +\cD[\rho_\gamma] - \cX[\gamma]
                    + \sum_{1\leq k<l\leq K}{Z_kZ_l\over|\bR_k-\bR_l|}\geq 0 .
\end{equation}
Using \eqref{eq:stabhf} and \eqref{eq:m2} we get
\begin{align*}
  \cE_{c,\varphi}^\mathrm{M}[\gamma] &= \cE_{c,\varphi}^\mathrm{HF}[\gamma]
  +\cX[\gamma]-\cX[\gamma^\frac12]-\lambda_0\tr\gamma\\
  &\geq
  (1-\frac{\kappa\pi}{2\lambda}-\frac1{2dc})\tr(T_c\gamma) - (c^2\lambda +c d+\lambda_0)\tr\gamma\\
  &=  -\left(c^2 + {1\over 2-\kappa\pi}-\lambda_0\right)\tr\gamma
\end{align*}
where we picked $\lambda=1$ and
\begin{equation}
  d:= {1\over 2c \left(1-{\kappa\pi\over2}\right)}
\end{equation}
to obtain the last inequality.
\end{proof}

\begin{acknowledgments}
Special thanks go to Kenji Yajima who supported a stay of H.S. at
Gakushuin University, Tokyo, where parts of the work were done.
Partial support by the Deutsche Forschungsgemeinschaft (DFG, German
Research Foundation) through TRR 352 – Project-ID 470903074 is
acknowledged.
\end{acknowledgments}

\def\cprime{$'$}

%\bibliographystyle{plain} 
%\bibliography{coulomb}

\begin{thebibliography}{10}

\bibitem{Bach1992}
Volker Bach.
\newblock Error bound for the {H}artree-{F}ock energy of atoms and molecules.
\newblock {\em Comm.\ Math.\ Phys.}, 147:527--548, 1992.

\bibitem{Bach1993}
Volker Bach.
\newblock Accuracy of mean field approximations for atoms and molecules.
\newblock {\em Comm.\ Math.\ Phys.}, 155:295--310, 1993.

\bibitem{Barbarouxetal2005}
Jean-Marie Barbaroux, Walter Farkas, Bernard Helffer, and Heinz Siedentop.
\newblock On the {H}artree-{F}ock equations of the electron-positron field.
\newblock {\em Comm. Math. Phys.}, 255(1):131--159, 2005.

\bibitem{Chen2013}
Hongshuo Chen.
\newblock The lower bound estimate of {M}\"uller's exchange-correlation energy.
\newblock Master's thesis, Tsinghua University, Peking, June 2013.

\bibitem{Darwin1928}
Charles~G. Darwin.
\newblock The wave equation of the electron.
\newblock {\em Proc. Roy. Soc. (London)}, A118:654--680, 1928.

\bibitem{Daubechies1983}
Ingrid Daubechies.
\newblock An uncertainty principle for {F}ermions with generalized kinetic
  energy.
\newblock {\em Comm.\ Math.\ Phys.}, 90:511--520, September 1983.

\bibitem{Dirac1928}
P.A.M. Dirac.
\newblock The quantum theory of the electron.
\newblock {\em Proc.~Royal Soc.~London}, 117(A):610--624, 1928.

\bibitem{Estebanetal2021}
Maria~J. Esteban, Mathieu Lewin, and \'{E}ric S\'{e}r\'{e}.
\newblock Dirac-{C}oulomb operators with general charge distribution {I}.
  {D}istinguished extension and min-max formulas.
\newblock {\em Ann. H. Lebesgue}, 4:1421--1456, 2021.

\bibitem{EstebanSere1999}
Maria~J. Esteban and Eric S{\'e}r{\'e}.
\newblock Solutions of the {D}irac-{F}ock equations for atoms and molecules.
\newblock {\em Comm. Math. Phys.}, 203(3):499--530, 1999.

\bibitem{FeffermanSeco1992}
C.\ Fefferman and L.\ Seco.
\newblock Eigenfunctions and eigenvalues of ordinary differential operators.
\newblock {\em Adv.\ Math.}, 95(2):145--305, October 1992.

\bibitem{FeffermanSeco1994}
C.\ Fefferman and L.\ Seco.
\newblock On the {D}irac and {S}chwinger corrections to the ground-state energy
  of an atom.
\newblock {\em Adv.\ Math.}, 107(1):1--188, August 1994.

\bibitem{FeffermanSeco1989}
C.~L.\ Fefferman and L.~A.\ Seco.
\newblock An upper bound for the number of electrons in a large ion.
\newblock {\em Proc.\ Nat.\ Acad.\ Sci.\ USA}, 86:3464--3465, 1989.

\bibitem{FeffermanSeco1990O}
C.~L.\ Fefferman and L.~A.\ Seco.
\newblock On the energy of a large atom.
\newblock {\em Bull.\ AMS}, 23(2):525--530, October 1990.

\bibitem{FeffermanSeco1993}
Charles~L. Fefferman and Luis~A. Seco.
\newblock Aperiodicity of the {H}amiltonian flow in the {T}homas-{F}ermi
  potential.
\newblock {\em Rev. Mat. Iberoamericana}, 9(3):409--551, 1993.

\bibitem{Franketal2007}
Rupert~L. Frank, Elliott~H. Lieb, Robert Seiringer, and Heinz Siedentop.
\newblock M\"uller's exchange-correlation energy in density-matrix-functional
  theory.
\newblock {\em Phys. Rev. A}, 76(052517), 2007.

\bibitem{Franketal2008}
Rupert~L. Frank, Heinz Siedentop, and Simone Warzel.
\newblock The ground state energy of heavy atoms: Relativistic lowering of the
  leading energy correction.
\newblock {\em Comm. Math. Phys.}, 278(2):549--566, 2008.

\bibitem{Franketal2009}
Rupert~L. Frank, Heinz Siedentop, and Simone Warzel.
\newblock The energy of heavy atoms according to {B}rown and {R}avenhall: the
  {S}cott correction.
\newblock {\em Doc. Math.}, 14:463--516, 2009.

\bibitem{Gordon1928}
Walter Gordon.
\newblock {D}ie {E}nergieniveaus des {W}asserstoffatoms nach der {D}iracschen
  {Q}uantentheorie.
\newblock {\em Z. Phys.}, 48:11--14, 1928.

\bibitem{HandrekSiedentop2015}
Michael Handrek and Heinz Siedentop.
\newblock The ground state energy of heavy atoms: the leading correction.
\newblock {\em Comm. Math. Phys.}, 339(2):589--617, 2015.

\bibitem{Herbst1977}
Ira~W.\ Herbst.
\newblock Spectral theory of the operator {$(p^2+m^2)^{1/2} - Ze^2/r$}.
\newblock {\em Comm.\ Math.\ Phys.}, 53:285--294, 1977.

\bibitem{Hughes1986}
Webster Hughes.
\newblock {\em An Atomic Energy Lower Bound that Gives {S}cott's Correction}.
\newblock PhD thesis, Princeton, Department of Mathematics, 1986.

\bibitem{Hughes1990}
Webster Hughes.
\newblock An atomic lower bound that agrees with {S}cott's correction.
\newblock {\em Adv.\ in Math.}, 79:213--270, 1990.

\bibitem{Iantchenko1997}
Alexei Iantchenko.
\newblock The electron density in intermediate scales.
\newblock {\em Comm. Math. Phys.}, 184(2):367--385, 1997.

\bibitem{Iantchenkoetal1996}
Alexei Iantchenko, Elliott~H.\ Lieb, and Heinz Siedentop.
\newblock Proof of a conjecture about atomic and molecular cores related to
  {S}cott's correction.
\newblock {\em J.\ reine angew.\ Math.}, 472:177--195, March 1996.

\bibitem{IantchenkoSiedentop2001}
Alexei Iantchenko and Heinz Siedentop.
\newblock Asymptotic behavior of the one-particle density matrix of atoms at
  distances {$Z\sp {-1}$} from the nucleus.
\newblock {\em Math. Z.}, 236(4):787--796, 2001.

\bibitem{JakubassaAmundsen2005}
D.~H. Jakubassa-Amundsen.
\newblock Localization of the essential spectrum for relativistic
  {$N$}-electron ions and atoms.
\newblock {\em Doc. Math.}, 10:417--445, 2005.

\bibitem{Klaus1980}
M.\ Klaus.
\newblock On the point spectrum of {D}irac operators.
\newblock {\em Helv.\ Phys.\ Acta}, 53:453--462, 1980.

\bibitem{Lieb1981E}
Elliott~H. Lieb.
\newblock Erratum: ``{V}ariational principle for many-fermion systems''
  [{P}hys. {R}ev. {L}ett. {\bf 46} (1981), no. 7, 457--459;\ {MR} 81m:81083].
\newblock {\em Phys. Rev. Lett.}, 47(1):69, 1981.

\bibitem{Lieb1981}
Elliott~H.\ Lieb.
\newblock {T}homas-{F}ermi and related theories of atoms and molecules.
\newblock {\em Rev.\ Mod.\ Phys.}, 53(4):603--641, October 1981.

\bibitem{Lieb1981V}
Elliott~H. Lieb.
\newblock Variational principle for many-fermion systems.
\newblock {\em Phys. Rev. Lett.}, 46(7):457--459, 1981.

\bibitem{LiebLoss2001}
Elliott~H. Lieb and Michael Loss.
\newblock {\em Analysis}, volume~14 of {\em Graduate Studies in Mathematics}.
\newblock American Mathematical Society, Providence, RI, second edition, 2001.

\bibitem{LiebSimon1977}
Elliott~H. Lieb and Barry Simon.
\newblock The {T}homas-{F}ermi theory of atoms, molecules and solids.
\newblock {\em Adv. Math.}, 23(1):22--116, 1977.

\bibitem{LiebYau1988}
Elliott~H.\ Lieb and Horng-Tzer Yau.
\newblock The stability and instability of relativistic matter.
\newblock {\em Comm.\ Math.\ Phys.}, 118:177--213, 1988.

\bibitem{Mancasetal2004}
Paul Mancas, A.~M.~Klaus M\"uller, and Heinz Siedentop.
\newblock The optimal size of the exchange hole and reduction to one-particle
  {H}amiltonians.
\newblock {\em Theoretical Chemistry Accounts: Theory, Computation, and
  Modeling (Theoretica Chimica Acta)}, 111(1):49--53, February 2004.

\bibitem{Meng2023}
Long Meng.
\newblock A rigorous justification of the mittleman's approach to the
  dirac--fock model, 2023.

\bibitem{Mittleman1981}
Marvin~H.\ Mittleman.
\newblock Theory of relativistic effects on atoms: Configuration-space
  {H}amiltonian.
\newblock {\em Phys. Rev. A}, 24(3):1167--1175, September 1981.

\bibitem{Morozov2008}
Sergey Morozov.
\newblock Essential spectrum of multiparticle {B}rown-{R}avenhall operators in
  external field.
\newblock {\em Doc. Math.}, 13:51--79, 2008.

\bibitem{Morozov2008M}
Sergey Morozov.
\newblock {\em Multiparticle {B}rown–{R}avenhall Operators in External
  Fields}.
\newblock PhD thesis, Fakult\"at f\"ur Mathematik, Informatik und Statistik der
  Ludwig-Maximilians-Universit\"at M\"unchen, Theresienstr. 39, 80333
  M\"unchen, Germany, March 2008.

\bibitem{MorozovMuller2017}
Sergey Morozov and David M\"{u}ller.
\newblock Lower bounds on the moduli of three-dimensional {C}oulomb-{D}irac
  operators via fractional {L}aplacians with applications.
\newblock {\em J. Math. Phys.}, 58(7):072302, 22, 2017.

\bibitem{Nenciu1976}
G.~Nenciu.
\newblock Self-adjointness and invariance of the essential spectrum for {D}irac
  operators defined as quadratic forms.
\newblock {\em Comm. Math. Phys.}, 48(3):235--247, 1976.

\bibitem{Paturel2000}
Eric Paturel.
\newblock Solutions of the {D}irac-{F}ock equations without projector.
\newblock {\em Ann. Henri Poincar\'e}, 1(6):1123--1157, 2000.

\bibitem{Pilkuhn2005}
Hartmut Pilkuhn.
\newblock {\em Relativistic Quantum Mechanics}.
\newblock Texts and monographs in physics. Springer, 2005.

\bibitem{Raynaletal1994}
J.~C. Raynal, S.~M. Roy, V.~Singh, A.~Martin, and J.~Stubbe.
\newblock The {``Herbst} {Hamiltonian''} and the mass of boson stars.
\newblock {\em Phys. Lett. B}, 320(1--2):105--109, January 1994.

\bibitem{Schmincke1972}
Upke-Walther Schmincke.
\newblock Distinguished selfadjoint extensions of {D}irac operators.
\newblock {\em Math.\ Z.}, 129:335--349, 1972.

\bibitem{SeilerSimon1975}
Erhard Seiler and Barry Simon.
\newblock Bounds in the {Yukawa$_2$} quantum field theory: Upper bound on the
  pressure, {H}amiltonian bound and linear lower bound.
\newblock {\em Communications in Mathematical Physics}, 45(2):99--114, 1975.

\bibitem{Sere2023}
\'{E}ric S\'{e}r\'{e}.
\newblock A new definition of the {D}irac-{F}ock ground state.
\newblock {\em Comm. Math. Phys.}, 404(3):1275--1307, 2023.

\bibitem{Siedentop2009}
Heinz Siedentop.
\newblock Das asymptotische {V}erhalten der {G}rundzustandsenergie des
  {M}\"ullerfunktionals f\"ur schwere {A}tome.
\newblock {\em J. Phys. A}, 42(8):085201, 9, 2009.

\bibitem{Siedentop2014}
Heinz Siedentop.
\newblock The quantum energy agrees with the {M}\"uller energy up to third
  order.
\newblock In Volker Bach and Luigi~Delle Site, editors, {\em Many-Electron
  Approaches in Physics, Chemistry and Mathematics}, Cham, 2014. Springer.

\bibitem{SiedentopWeikard1987O}
Heinz Siedentop and Rudi Weikard.
\newblock On the leading energy correction for the statistical model of the
  atom: Interacting case.
\newblock {\em Comm.\ Math.\ Phys.}, 112:471--490, 1987.

\bibitem{SiedentopWeikard1988}
Heinz Siedentop and Rudi Weikard.
\newblock On the leading energy correction of the statistical atom: Lower
  bound.
\newblock {\em Europhys. Lett.}, 6:189--192, 1988.

\bibitem{SiedentopWeikard1989}
Heinz Siedentop and Rudi Weikard.
\newblock On the leading correction of the {T}homas-{F}ermi model: Lower bound
  -- with an appendix by {A.} {M.} {K.} {M\"u}ller.
\newblock {\em Invent.\ Math.}, 97:159--193, 1989.

\bibitem{SiedentopWeikard1986}
Heinz K.~H.\ Siedentop and Rudi Weikard.
\newblock On the leading energy correction for the statistical model of the
  atom: Non-interacting case.
\newblock {\em Abh. Braunschweig. Wiss. Ges.}, 38:145--158, 1986.

\bibitem{Sommerfeld1915}
Arnold Sommerfeld.
\newblock Die {F}einstruktur der {W}asserstoff- und der
  {W}asserstoff-{\"a}hnlichen linien.
\newblock {\em Sitzungsberichte der mathematisch-physikalischen Klasse der
  K\"oniglich Bayerischen Akademie der Wissenschaften zu M\"unchen}, pages
  459--500, 1915.

\bibitem{Sucher1980}
J.\ Sucher.
\newblock Foundations of the relativistic theory of many-electron atoms.
\newblock {\em Phys.\ Rev.\ A}, 22(2):348--362, August 1980.

\bibitem{Thaller1992}
Bernd Thaller.
\newblock {\em The {D}irac Equation}.
\newblock Texts and Monographs in Physics. Springer-Verlag, Berlin, 1 edition,
  1992.

\bibitem{Tiefenbeck2010}
Matthias Tiefenbeck.
\newblock Das {M}\"ullerfunktional f\"ur relativistische {A}tome.
\newblock Master's thesis, Mathematisches Institut,
  Ludwig-Maximilians-Universit\"at M\"unchen, September 2010.

\bibitem{Tix1997}
C.~Tix.
\newblock Lower bound for the ground state energy of the no-pair {H}amiltonian.
\newblock {\em Phys. Lett. B}, 405(3-4):293--296, 1997.

\bibitem{Tix1998}
C.~Tix.
\newblock Strict positivity of a relativistic {H}amiltonian due to {B}rown and
  {R}avenhall.
\newblock {\em Bull. London Math. Soc.}, 30(3):283--290, 1998.

\end{thebibliography}
      
\end{document}